\documentclass[11pt]{article}

\usepackage{fullpage}

\usepackage{amsmath,amsthm, amssymb}
\usepackage{latexsym}
\usepackage{graphicx}
\usepackage{ifthen}
\usepackage{subcaption}
\usepackage{algorithm}
\usepackage[noend]{algpseudocode}
\usepackage[numbers]{natbib}
\usepackage{mathtools}
\usepackage{dsfont}
\usepackage{nicefrac}
\usepackage{wrapfig}
\usepackage{mfirstuc}
\usepackage{soul}

\usepackage[normalem]{ulem}
\usepackage{tikz}
\usetikzlibrary{arrows.meta,chains,decorations.pathreplacing}

\usepackage{hyperref}
\hypersetup{colorlinks=true,linkcolor=blue,citecolor=blue,urlcolor=blue}

\newtheorem*{theorem*}{Theorem}
\newtheorem*{definition*}{Definition}
\newtheorem*{proposition*}{Proposition}
\newtheorem*{corollary*}{Corollary}
\newtheorem{theorem}{Theorem}
\numberwithin{theorem}{section}
\newtheorem{lemma}{Lemma}
\numberwithin{lemma}{section}

\newtheorem{proposition}{Proposition}
\numberwithin{proposition}{section}
\newtheorem{observation}{Observation}
\numberwithin{observation}{section}
\newtheorem{example}{Example}

\usepackage{thm-restate}


\newcommand{\stam}[1]{}

\newcommand{\ronedit}[1]{{#1}}
\newcommand{\ron}[1]{}

\newcommand{\OPT}{\texttt{OPT}}


\newcommand{\E}{\mathbb{E}}

\newcommand{\msi}{\mathcal{S}_i}
\newcommand{\msj}{\mathcal{S}_j}

\newcommand\hmm[1]{\ifnum\ifhmode\spacefactor\else2000\fi>1900 \uppercase{#1}\else#1\fi}

\begin{document}

	\title{On a Competitive Secretary Problem with Deferred Selections}
	\author{%
	Tomer Ezra\thanks{This project has received funding from the European Research Council (ERC) under the European Union's Horizon 2020 research and innovation program (grant agreement No. 866132), and by the Israel Science Foundation (grant number 317/17).
	}\\{\small Tel Aviv University}\\ \\ \texttt{\small tomer.ezra@gmail.com}
	\and	Michal Feldman\footnotemark[1]\\{\small Tel Aviv University} \\ {\small and Microsoft Research}\\ \texttt{\small michal.feldman@cs.tau.ac.il } 
		\and	Ron Kupfer\thanks{This project has received funding from the European Research Council (ERC) under the European Union's Horizon 2020 research and innovation program (grant agreement No. 740282).
		}\\{\small The Hebrew University} \\ {\small of Jerusalem}\\ \texttt{\small ron.kupfer@mail.huji.ac.il}
}
\date{}
	\maketitle	

	\begin{abstract}
		We study secretary problems in settings with multiple agents. 
In the standard secretary problem, a sequence of arbitrary awards arrive online, in a random order, and a single decision maker makes an immediate and irrevocable decision whether to accept each award upon its arrival. 
The requirement to make immediate decisions arises in many cases due to an implicit assumption regarding competition.
Namely, if the decision maker does not take the offered award immediately, it will be taken by someone else.
The novelty in this paper is in introducing a multi-agent model in which the competition is {\em endogenous}.  
In our model, multiple agents compete over the arriving awards, but the decisions need not be immediate; instead, agents may select previous awards as long as they are available (i.e., not taken by another agent). If an award is selected by multiple agents, ties are broken either randomly or according to a global ranking.
This induces a multi-agent game in which the time of selection is not enforced by the rules of the games, rather it is an important component of the agent's strategy.
We study the structure and performance of equilibria in this game. 
For random tie breaking, 
we characterize the equilibria of the  game, and show that the expected social welfare in equilibrium is nearly optimal, despite competition among the agents.
For ranked tie breaking, we give a full characterization of equilibria in the 3-agent game, and show that as the number of agents grows, the winning probability of every agent under non-immediate selections approaches her winning probability under immediate selections.


\end{abstract}

\section{Introduction}
	\label{sec:intro}
	In the classic secretary problem \cite{Ferguson89whosolved} a decision maker observes a sequence of $n$ non-negative real-valued awards $v_1, \ldots, v_n$, which are unknown in advanced, in a random order.
At time step $t$, the decision maker observes award $v_t$, and needs to make an immediate and irrevocable decision whether or not to accept it. If she accepts $v_t$, the game terminates with value $v_t$; otherwise, the award $v_t$ is gone forever and the game continues to the next round.
The objective of the decision maker is to maximize the probability of choosing the maximal award.
A tight competitive ratio of $1/e$ is well known for this problem (see, e.g., \cite{Ferguson89whosolved}).

This problem (and variants thereof) is an abstraction that captures many real-life scenarios, such as an employer who interviews potential workers overtime, renters looking for a potential house, a person looking for a potential partner for life, and so on. 
This problem has interesting implications to mechanism design, auctions and pricing, for both welfare and revenue maximization, in various markets such as online advertising markets (see e.g. \cite{babaioff2008online,BabaioffIK07,EzraFN18,Freeman83Secretary,kesselheim2013optimal,kleinberg2005multiple}).

\paragraph{Competing Agents.}
Most attention in the literature on secretary problems has been given to scenarios in which a single agent makes {\em immediate and irrevocable} decisions. 
The requirement to make an immediate decision arises in many cases from an implicit competition.
Namely, if the decision maker does not take the current offered award, then it may be taken by someone else. 
For example, a potential employee who does not get a job offer following an interview will probably get a job in another firm. 
Indeed, competition among agents is a fundamental component in many real-life online scenarios. 

Recent work has considered the competition aspect in secretary-type problems.
For example, \cite{immorlica2006secretary} and \cite{karlin2015competitive} considered settings with multiple decision makers who compete over awards that arrive online. In these studies, as in the standard setting, decisions are immediate and irrevocable. 

In this paper, we introduce a multi-agent model in which the competition is {\em endogenous}. 
In particular, the different agents compete over the sequence of arriving awards, but unlike previous models, decisions need not be immediate. 
It is the endogenous competition that may drive agents to make fast selections, rather than the rules of the game. 
That is, the time to select an award is part of an agent's strategy.
In particular, every previously arriving award can be selected as long as it has not been taken by a different agent. 
Thus, in our model a sequence of $n$ non-negative real-valued awards $v_1, \ldots, v_n$ arrive over time, unknown from the outset. 
At time step $t$, all agents observe award $v_t$, and need to decide whether to select an available award, or to pass. 

One issue that arises in this model is how to resolve ties among agents. That is, who gets the award if several agents select it. We consider two natural tie-breaking rules; namely, random tie breaking (where ties are broken uniformly at random) and ranked tie-breaking (where agents are a-priori ranked by some global order, and ties are broken in favor of higher ranked agents). Random tie breaking fits scenarios with symmetric agents, whereas ranked tie breaking fits scenarios where some agents are preferred over others, according to some global preference order. For example, it is reasonable to assume that a higher-position/salary job is preferred over lower-position/salary job, or that firms in some industry are globally ordered from most to least desired. 
Random and ranked tie-breaking rules were considered by \citet{immorlica2006secretary} and \citet{karlin2015competitive}, respectively, in secretary settings with immediate and irrevocable decisions.

Two natural objectives have been considered in settings with competition. 
The first is to maximize the probability of receiving the maximal award (see, e.g., \cite{immorlica2006secretary,karlin2015competitive}).
The second is to outperform the competitors (see, e.g., the dueling framework studied by \citet{immorlica2011dueling}). 
We consider an extension of the latter objective, where an agent wishes to maximize the probability to win the $1^{st}$ place, then to win the $2^{nd}$ place, and so on.
Our goal is to study the structure and quality of equilibria in these settings.

%

	\subsection{Our Results and Techniques}
	\label{sec:results}

\subsubsection{Random Tie-Breaking}
For the random tie-breaking rule, we characterize the equilibria of the induced game, and show that the expected social welfare in equilibrium is nearly optimal, despite competition between the agents.
This is cast in Theorems \ref{thm:k-guarantee} and \ref{thm:sec_rand_SW}; a simplified statement follows: 
\begin{theorem*} (Theorems \ref{thm:k-guarantee} and \ref{thm:sec_rand_SW} )
	In every $k$-agent game with random tie-breaking, there exists a simple {\em time-threshold} strategy that guarantees each agent a winning probability of $\frac{1}{k}$, regardless of the strategies of the other agents. 
	The strategy profile where all agents play this strategy is a subgame perfect equilibrium (SPE). 
	Moreover, the expected social welfare of this SPE is at least a $\frac{k-1}{k}$ fraction of the sum of the top $k$ awards.
\end{theorem*}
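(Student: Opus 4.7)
The plan is to first pin down a concrete time-threshold strategy $\sigma^{*}$, then prove the three assertions in order: the unilateral $\tfrac{1}{k}$-guarantee, the SPE property of the symmetric profile, and the social welfare bound. A natural candidate for $\sigma^{*}$ is the strategy parameterized by a threshold $t^{*}$ (to be tuned, e.g.\ $t^{*}=n-k$): the agent waits passively until time $t^{*}$, and then at each subsequent step $t>t^{*}$ attempts to select the highest-value item that is still available among those seen so far. Under the symmetric profile this sweeps out the top-$k$ values in roughly descending order, and the symmetry of the random tie-breaking already suggests each agent obtains the maximum with probability $\tfrac{1}{k}$.

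The main technical challenge is proving the unilateral $\tfrac{1}{k}$-guarantee. My plan is to condition on the arrival time $\tau$ of the maximum $v^{*}=\max_{j}v_{j}$ and argue, via a symmetry/coupling argument over the random tie-breaking, that agent $i$ belongs to the pool of agents competing for $v^{*}$ at the first post-$t^{*}$ moment when $v^{*}$ is the best available item. The delicate sub-case is a sophisticated opponent who tries to preempt $v^{*}$ before $t^{*}$; the argument must exploit the fact that any such pre-commitment removes the opponent from the critical tie-breaking pool, which can only help agent $i$. The cleanest formalization I foresee is a coupling in which one randomly relabels the $k$ agents: by symmetry of random tie-breaking, agent $i$'s distribution over outcomes under $\sigma^{*}$ is at least the average of her opponents' distributions, which immediately yields the $\tfrac{1}{k}$ lower bound. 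I expect this step to be the hardest one.

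Once the unilateral guarantee is established, the SPE claim follows quickly. In the symmetric profile each agent's winning probability is $\geq\tfrac{1}{k}$, and since winning probabilities sum to at most $1$, each equals $\tfrac{1}{k}$ exactly. Any deviation by agent $i$ leaves the other $k-1$ agents still playing $\sigma^{*}$, each still guaranteed $\geq\tfrac{1}{k}$, so agent $i$'s winning probability is capped at $\tfrac{1}{k}$ and no profitable deviation exists at the first-place level. The lexicographic refinement to lower places follows by applying the same reasoning recursively to the $k-1$ agents remaining after the winner has committed, and subgame perfection follows because $\sigma^{*}$ has the same time-threshold structure in every reachable subgame (the argument is invariant under restriction to the continuation game).

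For the welfare bound, I plan to show that under the symmetric profile every top-$k$ item $v_{(j)}$ is selected with probability at least $\tfrac{k-1}{k}$; summing these contributions then yields expected welfare $\geq\tfrac{k-1}{k}\sum_{j=1}^{k}v_{(j)}$. The core observation is that after time $t^{*}$ each time step irrevocably assigns one item to one active agent, so at most one of the top-$k$ items can be missed (namely, one that arrives only after every agent has already committed). A short probabilistic calculation based on the uniform random arrival order bounds this miss probability by $\tfrac{1}{k}$ per top-$k$ item, which completes the proof.
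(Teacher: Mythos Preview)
Your proposed strategy and argument both have genuine gaps. First, a fixed threshold such as $t^{*}=n-k$ (or any threshold that ignores the number of currently active agents) does not give the unilateral $\tfrac{1}{k}$ guarantee and does not yield an equilibrium. For the guarantee: take $k=2$ and let the opponent select the max-so-far at time $n/2$ and stop. Under your $\sigma^{*}$ agent~$i$ commits at time $n-1$ to the best available item among $v_{1},\dots,v_{n-1}$; she then beats the opponent only when the maximum of $v_{1},\dots,v_{n-1}$ lies in positions $\{n/2+1,\dots,n-1\}$, which has probability $\tfrac{n/2-1}{n-1}<\tfrac{1}{2}$. For the equilibrium: against $k-1$ opponents all playing your $\sigma^{*}$, a deviator who instead grabs the max-so-far at time $n-k$ faces no competition and wins exactly when that item is the global maximum, i.e.\ with probability $\tfrac{n-k}{n}$, which exceeds $\tfrac{1}{k}$ whenever $n>k^{2}/(k-1)$. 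So the symmetric profile is not even a Nash equilibrium, let alone an SPE. The paper's strategy is crucially \emph{adaptive}: the threshold is $n/\ell_{t}$, where $\ell_{t}$ is the number of still-active agents, and one selects the max-so-far \emph{only if it is still available} (otherwise pass). In particular, when left alone the agent waits until time $n$.

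Second, the ``random relabeling'' coupling you sketch cannot prove a unilateral guarantee against arbitrary $S_{-i}$: permuting the tie-breaking labels does not permute the strategies, so there is no symmetry to average over when opponents play asymmetrically. The paper instead proves the guarantee by a direct backward induction on $(t,\ell_{t})$: conditioned on $\ell_{t}$ active agents and the max-so-far being available, selecting it yields winning probability $t/n$, while losing the tie to one of $\ell'-1$ competitors drops you into a state where by induction you can guarantee $\tfrac{n-t}{n(\ell_{t}-1)}$; the convex combination $\tfrac{1}{\ell'}\cdot\tfrac{t}{n}+\tfrac{\ell'-1}{\ell'}\cdot\tfrac{n-t}{n(\ell_{t}-1)}$ is at least $\tfrac{1}{\ell_{t}}$ for all $\ell'$ precisely when $t\ge n/\ell_{t}$, and at most $\tfrac{1}{\ell_{t}}$ (so passing is better) when $t\le n/\ell_{t}$. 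This pinpoints the adaptive threshold and simultaneously gives the guarantee; the SPE and the lexicographic refinement come from carrying analogous bounds for the second-place probability through the same induction.

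Finally, the welfare bound must be redone for the correct strategy. Under the adaptive thresholds it is \emph{not} true that one item is assigned per time step after a fixed point, so ``at most one top-$k$ item can be missed'' deterministically is false. The paper instead bounds the \emph{expected} number $A$ of selections made before the first top-$k$ item arrives (showing $\mathbb{E}[A]<1$ via a short calculation using the random arrival order), observes that every later selection must be a top-$k$ item, and then invokes a monotonicity/Chebyshev-type step to convert ``$\ge k-1$ top-$k$ items chosen in expectation'' into the $\tfrac{k-1}{k}$ welfare bound.
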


In particular, we show that each of the $k$ agents can guarantee herself a winning probability of $\frac{1}{k}$ following a simple time-threshold strategy that depends only on the current time, the number of active agents, and whether the maximal award so far is available. 
By symmetry, this is the maximal possible guarantee. 
This guarantee is then used to fully characterize the set of subgame perfect equilibria of the game.

We then establish that in equilibrium, the expected social welfare is at least  $\frac{k-1}{k}$ fraction of the sum of the top $k$ awards which is the optimal welfare (i.e., we bound the {\em price of competition}). We do so by using the following two observations: First, we show that in equilibrium the expected number of selected awards among the top $k$ awards is high. Second, we observe that the probability of an award to be selected in equilibrium is monotone in its rank among the awards. 

We complement this result with a matching upper bound (up to a constant factor), which is derived by observing that in equilibrium there is a constant probability that the first selected award is not one of the top $k$ awards. 

\subsubsection{Ranked Tie-Breaking}
For the ranked tie-breaking rule, we show that for a sufficiently large number of agents, the winning probabilities under immediate- and non-immediate selections are roughly the same.
\begin{theorem*} (Informal  Theorem \ref{thm:sec lex non immidate goes to immediate})
	Under the ranked tie-breaking rule, for every rank $i$, as the number of agents grows, the winning probabilities of the $i^{th}$ ranked agent under non-immediate selections approaches her winning probability under immediate selections. 
\end{theorem*}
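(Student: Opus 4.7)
The plan is to show that as $k$ grows, the strategic gain from deferring a selection in the non-immediate game vanishes, so that any subgame-perfect equilibrium of the $k$-agent non-immediate game induces essentially the same winning probabilities as the corresponding equilibrium of the immediate game. Let $p_i^{\mathrm{imm}}(k)$ and $p_i^{\mathrm{non}}(k)$ denote the equilibrium winning probability of the $i$-th ranked agent in the two games, respectively; the goal is to bound $\big|p_i^{\mathrm{non}}(k) - p_i^{\mathrm{imm}}(k)\big|$ by a quantity that tends to $0$ as $k \to \infty$.

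The first step is a deferral-cost lemma: in any subgame-perfect equilibrium, the probability that the current best-available award survives even one additional time step without being selected by some agent decays with $k$. The intuition is a best-response argument at a sufficiently low-ranked agent: since she loses every future tie against every higher-ranked agent, she has nothing to gain from waiting on the current best-available award whenever her immediate-play winning probability for it is positive, and must therefore claim it almost surely. Combined with the upper bound on the total value any single agent can extract -- already implicit in the $1/k$-type guarantees derived for the random tie-breaking case -- this shows that the expected number of rounds a top-available award can remain idle in equilibrium shrinks to zero.

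I would then use this lemma to bound the payoff agent $i$ can extract from any non-immediate deviation. If a best-available award persists to the next round with probability at most $f(k)$ where $f(k) \to 0$, then a strategy that defers a claim by $\tau$ rounds succeeds with probability at most $f(k)^\tau$, so the expected improvement in winning probability relative to an immediate-style best response is $o(1)$. Consequently, every equilibrium of the non-immediate game is close, in the induced distribution over the identity of the winner, to some equilibrium of the immediate game, and passing to winning probabilities yields $p_i^{\mathrm{non}}(k) = p_i^{\mathrm{imm}}(k) + o(1)$.

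The main obstacle will be handling the full generality of equilibrium strategies in the non-immediate game, since agents' decisions can depend on arbitrary histories (which agents are still active, which prior awards remain available, the arrival order so far), making a direct comparison with the immediate game delicate. I expect the cleanest route is a backward induction on the time horizon combined with conditioning on the set of active agents: in each subgame the remaining problem is again a ranked non-immediate game with a (possibly smaller) set of agents, and the deferral-cost bound can be propagated from the last time step back to the first. Propagating the induction is subtle because the active set at each point is itself an equilibrium object, but the uniform bounds derived from the $1/k$ guarantee should provide enough control to close the argument.
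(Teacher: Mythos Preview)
Your proposal attempts a direct structural comparison of the two games via a ``deferral-cost lemma,'' but this route is both harder than necessary and contains real gaps. First, the appeal to ``$1/k$-type guarantees derived for the random tie-breaking case'' is misplaced: those guarantees rest on symmetry and do not carry over to the ranked setting, where the top agent already secures at least $1/e$ regardless of $k$. Second, the deferral-cost lemma itself is only asserted heuristically; the claim that a low-ranked agent ``has nothing to gain from waiting'' is not obviously true, since waiting may induce a higher-ranked agent to commit first, improving her continuation game. Third, you explicitly flag the backward induction as ``subtle'' and leave it unresolved, so the argument is at best a programme rather than a proof.

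The paper's argument bypasses all of this with two elementary inequalities and one external fact. (i) Each agent $i$ can mimic her immediate-game threshold strategy in the non-immediate game, giving $p_{i,k}\geq q_i$. (ii) Exactly one agent wins, so $\sum_j p_{j,k}=1$. The external input is a connection (due to Matsui--Ano) between the immediate $k$-agent game and the single-agent $k$-choice secretary problem, yielding $\sum_{j=1}^k q_j\to 1$ as $k\to\infty$. Combining,
\[
p_{i,k}=1-\sum_{j\neq i}p_{j,k}\;\leq\;1-\sum_{j\neq i}q_j\;=\;q_i+\Bigl(1-\sum_{j}q_j\Bigr)\;\longrightarrow\;q_i,
\]
which together with (i) gives the limit. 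No analysis of deferral dynamics or of the non-immediate equilibrium structure is needed beyond the trivial observation that the winning probabilities sum to one; this is the key idea your proposal is missing.
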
 

To prove this result, we use observations from \cite{matsui2016lower,karlin2015competitive,EzraFN18} to show that in the immediate decision model, the probability that the maximal award is allocated goes to $1$ as the number of agents grows.
Since an agent in the non-immediate decision model can always mimic the strategy of an agent in the immediate decision model, her winning probability (which equals to her probability of receiving the maximal award) in the non-immediate model is at least her probability of receiving the maximal award in the immediate decisions game. 
We therefore deduce that the winning probabilities in the non-immediate model converge to those in the immediate model.
This claim essentially formalizes the intuition that as competition grows, the urgency to select awards faster grows. 

In addition, we fully characterize the equilibria of the three-agent game. 
\begin{theorem*} (Theorem \ref{thm:three_agents})
	In every equilibrium of the 3-agent game, agent $1$ wins with probability $\frac{4-4\ln2 }{6-4\ln2}\approx0.38$ while each of agents $2,3$ wins with probability $\frac{1}{6-4\ln 2}\approx0.31$.
\end{theorem*}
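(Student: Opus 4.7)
The plan is to solve the 3-agent ranked game by backward induction on the number of active agents, with the residual 1-agent and 2-agent subgames serving as building blocks.

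First I would argue that in equilibrium every agent plays a threshold strategy: at each history (characterized by the current time $t$, the set of active agents, and the arrival time of the current max-so-far), she attempts to select the current max-so-far once $t$ crosses an agent-specific threshold, and passes otherwise. The justification is that under the lexicographic objective the optimal stopping problem is monotone in $t$, so best responses can be taken to be threshold rules. Moreover, by monotonicity of urgency in rank---lower-ranked agents lose ties and thus gain less from waiting---agent $3$ holds the smallest threshold, agent $2$ the middle one, and agent $1$ waits the longest.

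Second, I would solve the residual subgames. The 1-agent residual starting at current time $s$ is a standard secretary continuation: the agent waits until $\max(s,1/e)$ and then grabs the first subsequent max-so-far, winning with probability $\max(1/e,\,-s\ln s)$. In the 2-agent residual with active ranks $i<j$, the lower-ranked agent $j$ commits to a threshold $t_j^{(2)}$ at which she grabs the current max-so-far; the higher-ranked agent $i$ best-responds by waiting past $t_j^{(2)}$ and then grabbing the first new max-so-far (using the 1-agent continuation rule, since she is alone thereafter). Writing the two winning probabilities as functions of $t_j^{(2)}$ and imposing that $t_j^{(2)}$ is a best response for $j$ yields closed-form expressions, with the $\ln$ term arising from the integrated hazard of new maxes on the interval $[t_j^{(2)},1]$.

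Third, I would solve the 3-agent equilibrium. At time $t_3^{(3)}$ agent $3$ grabs the current max-so-far, and the continuation is the 2-agent $\{1,2\}$-subgame from time $t_3^{(3)}$ whose winning probabilities were computed above. Similarly, at $t_2^{(3)}$ agent $2$ acts and the continuation is a 1-agent game for agent $1$. Imposing the first-order optimality conditions for $t_2^{(3)}$ and $t_3^{(3)}$---each agent's winning probability is maximized at her equilibrium threshold given the others'---produces a coupled system of two equations in two unknowns. Solving this system and plugging back into the winning-probability formulas yields $p_1=\frac{4-4\ln 2}{6-4\ln 2}$ and $p_2=p_3=\frac{1}{6-4\ln 2}$; the fact that these three probabilities sum to $1$ confirms that the max is always selected, giving a sanity check on the computation.

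The main obstacle is to show that \emph{every} equilibrium---not just the natural threshold equilibrium above---produces the same winning probabilities. I would argue this by showing that any equilibrium strategy is payoff-equivalent to a threshold strategy, using that each agent's continuation value, as a function of her own stopping time and holding the others' strategies fixed, is strictly concave on the interval between the adjacent thresholds of the other agents, so the best response is a single threshold. One then verifies that the first-order system admits a unique admissible solution, which pins down the winning probabilities uniquely.
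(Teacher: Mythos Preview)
The backward-induction scaffolding is the right idea, but there is a genuine error that would derail the computation.

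Your 1-agent residual is imported from the \emph{immediate}-decision secretary (threshold $1/e$, winning probability of the form $-s\ln s$). In the present model selections are deferred: once an agent is alone she simply waits until time $n$ and takes the best remaining award, winning if and only if the global maximum has not already been claimed by a competitor. There is no $1/e$ anywhere in the lone-agent continuation. Since you feed this into the 2-agent analysis (``the higher-ranked agent \ldots grabs the first new max-so-far using the 1-agent continuation rule''), the error propagates and you would not arrive at the stated numbers.

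The paper's route is also structurally simpler than your ``two equations in two unknowns.'' By Observation~\ref{obs:last2threshold} one has $T_\ell^\ell\in\{T_{\ell-1}^\ell,\,T_{\ell-1}^\ell-1\}$; applied with $\ell=2$ this pins the 2-agent residual to $T_1^2\approx T_2^2\approx n/2$ with no optimization needed, and applied with $\ell=3$ it forces $T_2^3\approx T_3^3=:\tau$, leaving a \emph{single} unknown. Whoever wins the tie at time $\tau$ (say agent~$2$) has $p_2=\tau/n$; the other low-ranked agent continues into the two-agent residual, whose thresholds are both $n/2$, and one writes her winning probability explicitly as a function of $\tau$. Observation~\ref{obs:pik2} gives $p_2\approx p_3$, so equating the two expressions yields one equation whose solution is $\tau/n=\tfrac{1}{6-4\ln 2}$, and then $p_1=1-p_2-p_3$. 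Your sentence ``at $t_2^{(3)}$ agent~$2$ acts and the continuation is a 1-agent game for agent~$1$'' conflates agent~$2$'s threshold while all three are active with her threshold once only two remain; after one agent departs, the relevant thresholds are the 2-agent ones ($\approx n/2$), not $t_2^{(3)}$.

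Uniqueness of the winning probabilities needs no concavity argument: it follows because the thresholds are pinned down (up to $\pm 1$) by Observations~\ref{obs:monotonicity} and~\ref{obs:last2threshold}, and then $p_2=p_3$ (up to $1/n$) determines $\tau$.
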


Notice that agent $1$ (the highest-ranked agent) can always guarantee herself a winning probability of at least $e^{-1}\approx0.37$ by acting according to the optimal strategy in the classical secretary problem. 
The last theorem shows that in a setting with 3 agents, the benefit that agent 1 derives due to her ability to postpone decisions is quite small ($0.38$ vs. $0.37$). 
As implied by Theorem~\ref{thm:sec lex non immidate goes to immediate}, this benefit shrinks as the number of agents grows.

	\subsection{Related Work}
The classical secretary problem and variants thereof have attracted broad interest and have resulted in a vast amount of literature over the years. 
For a comprehensive survey, see, e.g.,~\cite{Freeman83Secretary}.

\paragraph{Competing Agents.} The closest papers to our work are the studies by \citet{karlin2015competitive} and \citet{immorlica2006secretary}, who study  secretary settings with competing agents, with the ranked- and random tie breaking rules, respectively. The main difference between theirs models and ours is that they consider multi-agent settings where agents must make decisions immediately, while in our model the competition is endogenous; namely, past awards can be selected as long as they are available. 
\citet{karlin2015competitive} show that under the ranked tie-breaking rule, the optimal strategy for each agent is a time-threshold strategy, which is given in the form of a recursive formula (albeit not in a closed form). 
\citet{immorlica2006secretary} characterize the Nash equilibria under the random tie-breaking rule.
Another related work is the dueling framework by \citet{immorlica2011dueling}, who considered, among other settings, a dueling scenario between two secretary algorithms, whose objective is to outperform the opponent algorithm.

\paragraph{Matroid- and Uniform Matroid Secretaries.} 
In this paper we derive insights from studies on secretary variants in which a decision maker can choose multiple awards, based on some feasibility constraints. \citet{BabaioffIK07} introduced the matroid secretary problem, where a decision maker selects multiple awards under a matroid constraint. It has been shown that a constant competitive ratio can be achieved for some matroid structures, but the optimal competitive ratio for arbitrary matroids is still open. 
An interesting special case (which was also studied in earlier works such as \cite{kleinberg2005multiple} and \cite{Gilbert66Recognizing}) is one where the decision maker may choose up to $k$ awards (also known as a $k$-uniform matroid constraint). 
\citet{Gilbert66Recognizing,sakaguchi1978dowry,matsui2016lower,EzraFN18} studied secretary models in which a decision maker wishes to maximize the probability of getting the highest award, but may choose up to $k$ awards. We draw interesting connections between these models and the one studied in our paper.

\paragraph{Non-Immediate and Irrevocable Decisions.} Other relaxations of the requirement to select immediately have been considered in the literature.  
\citet{ho2015secretary} consider a {\em sliding-window} variant, where decisions may be delayed for a constant amount of time. A similar model is considered by \citet{KesselheimPV19}, where decisions may be delayed for a randomized (not known in advance) amount of time.
\citet{EzraFN18} study settings where the irrevocability assumption is relaxed. Specifically, they consider a setting where the decision maker can select up to $\ell$ elements immediately and irrevocably, but her performance is measured by the top $k$ elements in the selected set. 
This work is complementary to ours in the sense that it relaxes the irrevocability assumption, while our work relaxes the immediacy assumption.

	\subsection{Paper's Structure}
	Our model is presented in Section \ref{sec:model}.
In Sections \ref{sec:secretary-backward-random} and \ref{sec:secretary-backward-fixed} we present our results with respect to the random tie-breaking rule, and the ranked tie-breaking rule, respectively.
We conclude this paper in Section \ref{sec:future}, where we discuss future directions. 

	\section{Model}
	\label{sec:model}
	\newcommand{\bars}{\bar{S}}
\newcommand{\barp}{\bar{p}}

We consider a variant of the classical secretary setting, where a set of $n$ arbitrary awards are revealed online in a uniformly random order.
Let $v_t$ denote the award revealed at time $t$. 
Unlike the classical secretary problem that involves a single decision maker, in our setting there are $k$ agents who compete over the awards. 
Upon the revelation of award $v_t$, every agent who has not received an award yet may select one of the awards $v_1,\ldots,v_t$ that haven't been assigned yet.
An award that is selected by a single agent is assigned to this agent.
An award that is selected by more than one agent is assigned to one of these agents either randomly (hereafter, {\em random tie breaking}), or according to a predefined ranking (hereafter, {\em ranked tie breaking}). 
Agents who received awards are no longer active. 
Awards that were assigned are no longer available. 
Thus, at time $t$, the set of {\em available} awards is the subset of awards $v_1, \ldots, v_t$ that have not been assigned yet.
The game continues as long as there are active agents. I.e., after time $n$, if active agents remain, the agents compete (without newly arriving awards) on the remaining available awards until all agents are allocated.

Given an instance of a game, the history at time $t$
includes all the relevant information revealed up to time $t$; i.e., $v_1,\ldots,v_t$, and the assignments up to time $t-1$\footnote{In our setting, additional information, such as the history of selections (in contrast to assignments) is irrelevant for future decision making.}.
A strategy of agent $i$, denoted by $S_i$, is a function from the set of all possible histories to a selection decision (either selecting one of the available awards, or passing).
A strategy profile is denoted by $S=(S_1,\ldots,S_k)$.
We also denote a strategy profile by $S=(S_i,S_{-i})$, where $S_{-i}$ denotes the strategy profile of all agents except agent $i$.
Every strategy profile $S$ induces a distribution over assignments of awards to agents. 
For ranked tie breaking, the distribution is with respect to the random order of award arrival, and possibly the randomness in the agent strategies. For random tie breaking, the randomness is also with respect to the randomness in the tie breaking. 

As natural in competition settings, every agent wishes to win the game; that is, to receive an higher award than her competitors. 
We say that agent $i$ wins the $j^{th}$ place in the game if she receives the $j^{th}$ highest award among all allocated awards. 
Let $p=(p_1, \ldots, p_k)$ be a $k$-dimensional vector, where $p_j$ is the probability to win the $j^{th}$ place.
Given two vectors $p,\barp \in R^k$, $p$ is preferred over $\barp$, denoted $\barp \prec p$, if $p$ is lexicographically greater than $\barp$. 
Similarly, $p$ is weakly preferred over $\barp$, denoted $\barp \preceq p$, if $p$ is lexicographically greater or equal to $\barp$. 
That is, every agent wishes to maximize her probability to win the first place, upon equality to maximize the probability to win the second place, and so on.

A strategy profile $S$ induces a $k$-dimensional probability vector $p^i(S)$ for each agent $i$, where $p^i_j(S)$ is the probability that agent $i$ wins the $j^{th}$ place under strategy profile $S$.
Agent $i$ derives higher (respectively, weakly higher) utility from strategy profile $S$ than strategy profile $\bars$, denoted $\bars \prec_i S$ (resp., $\bars \preceq_i S$), if $p^{i}(\bars) \prec p^{i}(S)$ (resp., $p^{i}(\bars) \preceq p^{i}(S)$).
We use $\bars \prec_i S$ and $S \succ_i \bars$ interchangeably, and similarly for $p$ and $\barp$.


Note that $p^{i}_{1}(S)$ is the probability that agent $i$ wins the first place under strategy profile $S$. We sometimes refer to it as agent $i$'s {\em winning probability} under $S$.

\paragraph{Equilibrium notions.}
We consider the following equilibrium notions.
\begin{itemize}
\item Nash Equilibrium: A strategy profile $S=(S_1,\ldots, S_k)$ is a {\em Nash equilibrium} (NE) if for every agent $i$ and every strategy $S'_i$, it holds that $(S'_i,S_{-i}) \preceq_i (S_i,S_{-i})$.

\item Subgame perfect equilibrium: A strategy profile $S=(S_1,\ldots, S_k)$ is a {\em subgame perfect equilibrium} (SPE) if it is a NE for every subgame of the game. I.e. for every initial history $h$, $S$ is a NE in the game induced by history $h$. 
\end{itemize}

SPE is a refinement of NE; namely, every SPE is a NE, but not vice versa.
		


	\section{Random Tie-Breaking}
	\label{sec:secretary-backward-random}
	In this section, we study the setting of the random tie-breaking rule.
We characterize the SPEs and give simple time threshold strategies with optimal utility guarantees. We show that in the SPE where all agents play according to this optimal guarantee strategy, at least $\frac{k-1}{k}$ of the optimal social welfare is achieved in expectation.

Consider the following strategy $\sigma_i$ for agent $i$, where $\ell_t$ denotes the number of active agents at time $t$ (including agent $i$).


\begin{itemize}
	\item If $t \geq n$, then select the maximal available award.
	\item If $\frac{n}{\ell_t} < t <n$, and the maximal award so far is available, then select it.
	\item If $ t = \frac{n}{2}  $, $\ell_t=2$, and the maximal award so far is available, then select it.
	\item Else, pass (i.e., select no award).		
\end{itemize}

We denote by $\msi$ the set of strategies in which agent $i$ plays according to $\sigma_i$ up to the following cases:
\begin{itemize}
	\item If $\ell_t=2$, $t=\frac{n}{2}$, and both the highest and the second highest awards so far are available, then the agent can either pass or select the highest award so far.
	\item If $\ell_t=1$ and the highest award so far is available, then the agent can either pass or select the highest available award.
\end{itemize}

We next show that strategies in $\msi$ are the only strategies that guarantee a utility of at least $(\frac{1}{k},\ldots,\frac{1}{k})$.
By symmetry, there is no strategy $\sigma$ such that $\left(\frac{1}{k},\ldots,\frac{1}{k}\right) \prec p^i(\sigma,S_{-i})$ for every $S_{-i}$. 
Thus, the strategy profiles where each agent $i$ plays according to a strategy in $\msi$, are the only SPEs.
\begin{theorem} \label{thm:k-guarantee}
For every agent $i$, $\sigma \in \msi$ and $S_{-i}$, it holds that $\left(\frac{1}{k},\ldots,\frac{1}{k}\right) \preceq p^i(\sigma,S_{-i})$ . 
For every $\sigma' \notin \msi$ there exists $S_{-i}$ such that $\left(\frac{1}{k},\ldots,\frac{1}{k}\right) \succ p^i(\sigma',S_{-i})$.
Moreover, the strategy profiles where each agent $j$ plays according to a strategy in $\msj$, are the only SPEs.
\end{theorem}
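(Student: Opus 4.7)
The plan is to establish the three components of the theorem in sequence: the guarantee for $\sigma \in \mathcal{S}_i$, the necessity of $\mathcal{S}_i$, and the SPE characterization. I will begin with a symmetry observation: when all $k$ agents play the same strategy from $\mathcal{S}_i$, the joint symmetry of the uniformly random arrival order and the random tie-breaking rule forces every agent to win the $j$-th place with probability exactly $1/k$ for every $j \in [k]$, so the lex vector is $(1/k, \ldots, 1/k)$. This both identifies the target value and implies, by the budget constraint $\sum_i p^i_j(S) = 1$ for each place $j$, that no strategy can lex-beat $(1/k, \ldots, 1/k)$ against every opponent profile.

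To prove the guarantee $p^i(\sigma, S_{-i}) \succeq (1/k, \ldots, 1/k)$ for arbitrary $S_{-i}$, the plan is to proceed by induction on $k$ and exploit the structural property that $\sigma$ only ever selects the maximum-so-far, and only when $t > n/\ell_t$ (or at the designated boundary cases). Letting $T^*$ be the first time any agent selects the overall maximum, agent $i$ participates in the tie-break at $T^*$ whenever her threshold condition is satisfied, in which case she wins the max with probability at least $1/\ell_{T^*} \geq 1/k$ by the random tie-breaking rule. The heart of the argument is a monotonicity / coupling claim with the symmetric profile: any deviation by the opponents either (a) causes an opponent to become inactive earlier, which decreases $\ell_t$ and lowers agent $i$'s threshold $n/\ell_t$, enabling her to race for the max sooner, or (b) causes opponents to target the max at times when agent $i$'s threshold is also met, producing ties in which agent $i$'s share is at least $1/k$. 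An opponent cannot beat agent $i$ to the max outside a tie without violating the threshold structure that $\sigma$ exactly straddles. The lex statement on subsequent places then follows by applying the same argument recursively to the residual $(k-1)$-agent game after the max is allocated, which is itself a smaller instance of the same model.

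For the converse direction — that any $\sigma' \notin \mathcal{S}_i$ can be punished — I will construct $S_{-i}$ tailored to the specific deviation. If $\sigma'$ selects at a time or on an award that $\sigma_i$ would skip (outside the permissible relaxations of $\mathcal{S}_i$), choose opponents that mimic $\sigma_j$; then $\sigma'$'s premature or off-target selection forfeits the chance at the maximum, and at the eventual race for the max the remaining active opponents split the $1/k$ slice that $\sigma'$ has already sacrificed, yielding $p^i_1 < 1/k$. If $\sigma'$ passes at a time when $\sigma_i$ mandates selecting an available max-so-far, an opponent can be configured to seize that max on the very next step. In the boundary cases listed in $\mathcal{S}_i$ (at $t = n/2$ with $\ell_t = 2$ and the two top awards both available, or at $\ell_t = 1$ with the max available), one checks that passing is payoff-equivalent to selecting, which is exactly why both choices lie in $\mathcal{S}_i$.

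With these two directions, the SPE characterization follows: in any SPE, agent $j$'s strategy must be lex-optimal against the others, hence must deliver at least $(1/k, \ldots, 1/k)$ against any opponent profile (since she can unilaterally guarantee this by switching to $\sigma$); by the converse direction this forces the equilibrium strategy to lie in $\mathcal{S}_j$. The main obstacle I expect is the monotonicity step: formally verifying, through a coupling with the symmetric game, that the cascading effect of opponent deviations on $\ell_t$ and hence on agent $i$'s future threshold times never reduces her probability of claiming the max. Making this precise requires tracking, step by step, how a change in one opponent's action propagates through the subsequent $\ell_t$-sequence and through agent $i$'s induced selection times, and verifying the ordering is preserved at every branching.
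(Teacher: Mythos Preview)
Your proposal has the right high-level intuition but contains two genuine gaps, and the paper's route is substantively different from what you sketch.

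First, the ``monotonicity/coupling'' step that you yourself flag as the main obstacle is not merely hard to formalize; as stated it is wrong. You write that ``an opponent cannot beat agent $i$ to the max outside a tie without violating the threshold structure that $\sigma$ exactly straddles.'' But an opponent playing an arbitrary $S_{-i}$ \emph{can} grab the max-so-far at a time $t\le n/\ell_t$, before agent $i$'s threshold fires, with no tie at all. The correct point is not that this cannot happen, but that when it does, the selected award is the global maximum only with probability $t/n\le 1/\ell_t$, and on the complement agent $i$ continues with $\ell_t-1$ active opponents and (by induction) can still secure at least $\frac{n-t}{n}\cdot\frac{1}{\ell_t-1}\ge\frac{1}{\ell_t}$. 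Turning this into a proof requires a backward induction on the pair $(t,\ell_t)$ that tracks whether the max-so-far is currently available, with explicit lower bounds in each state. That is precisely what the paper's supporting lemma does; your coupling sketch does not substitute for it.

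Second, and more seriously, your plan for the later coordinates of the lexicographic vector---``apply the same argument recursively to the residual $(k-1)$-agent game after the max is allocated, which is itself a smaller instance of the same model''---does not work. After the global maximum is allocated at some random time $t$, the residual game is \emph{not} a fresh instance: $t$ awards have already been revealed, the second-highest-so-far may or may not still be available, and $\sigma$ (which reacts to $\ell_t$ and to the current max-so-far) behaves differently than at the start of a fresh game. The paper handles the second coordinate not by recursion on $k$ but by enlarging the state space to four cases $A,B,C,D$ (according to whether the max-so-far and the second-max-so-far are available) and carrying the \emph{pair} (first-place probability, second-place probability) through the same backward induction, obtaining closed-form lower bounds such as $A_t^{\ell}\ge(1/\ell,1/\ell)$ and $C_t^{\ell}\ge\bigl(\tfrac{n-t}{n\ell},\tfrac{n^2+t^2-tn-n}{n(n-1)\ell}\bigr)$. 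This two-coordinate analysis is exactly what identifies the boundary cases where selecting and passing are first-coordinate-indifferent (the $t=n/2,\ \ell=2$ case and the $\ell=1$ case), which in turn is what pins down $\mathcal S_i$ and drives both the converse (punishing) direction and the SPE characterization. Your outline does not supply any mechanism for establishing the second coordinate, and without it neither the definition of $\mathcal S_i$ nor the ``only SPEs'' claim can be justified.
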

Before proving Theorem~\ref{thm:k-guarantee}, we show the following:
\begin{observation}
For every time $t<n$, selecting an element that is not the maximal so far cannot guarantee a winning probability of $\frac{1}{k}$.
\end{observation}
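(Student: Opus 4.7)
The plan is to exhibit a continuation strategy profile for the other agents, starting from the given history at time $t$, under which agent $i$'s conditional winning probability becomes exactly $0$, strictly less than $\tfrac{1}{k}$. The underlying idea is that the other agents can collectively ``hold out,'' let agent $i$ commit to the inferior award $v_s$, and then grab all of the larger awards for themselves.

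Concretely, fix a time $t<n$ and a history at which agent $i$'s strategy prescribes selecting an award $v_s$ with $v_s<\max_{r\le t} v_r$. Let every still-active agent $j\neq i$ adopt the continuation strategy ``pass at every step $t'$ with $t\le t'<n$, and at every step $t'\ge n$ select the currently maximum available award.'' This is a well-defined function of the observed history and therefore a legitimate strategy in the model.

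Under this continuation, agent $i$ takes $v_s$ at time $t$ and becomes inactive, while no other award is assigned before time $n$. Hence at time $n$ the set of available awards is exactly $\{v_1,\ldots,v_n\}\setminus\{v_s\}$, and in particular still contains the global maximum $\max_j v_j$, which satisfies $\max_j v_j\ge \max_{r\le t}v_r > v_s$. At time $n$ all $k-1$ remaining active agents select this global maximum; the tie-breaking rule, random or ranked, assigns it to one of them, and the post-$n$ rounds (during which, per the model, the game continues without new arrivals until every active agent is allocated) distribute the remaining top awards. In particular, some agent other than $i$ ends up holding an award strictly greater than $v_s$, so agent $i$ cannot win the first place, and her conditional winning probability under this profile is $0<\tfrac{1}{k}$.

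The only step requiring care is checking that the profile is well-defined throughout the post-$n$ continuation phase, which follows directly from the model's rule that the game proceeds until every active agent is allocated. The argument is uniform in $t$, in the number of currently active agents, and in the tie-breaking rule, so it applies to every history with $t<n$ at which a non-maximal award is selected.
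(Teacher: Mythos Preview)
Your argument is correct. The paper states this observation without proof, treating it as self-evident, so there is no alternative approach to compare against. Your explicit ``hold out'' continuation for the remaining agents---pass until time $n$, then claim the global maximum---cleanly formalizes the intended intuition: once agent $i$ commits at time $t<n$ to some $v_s$ strictly below the running maximum, the global maximum necessarily survives to time $n$ and is captured by another agent, forcing agent $i$'s conditional winning probability to $0$.
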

Thus, we can assume that agents do not select elements that are not the maximal so far up to time $t=n$.
We now give lower bounds on the probabilities of winning first and second places in a strategy $\sigma \in \msi$ given a time $t$, and whether the maximal and second maximal awards so far are available.
Let $A_t^\ell\in[0,1]^2$ be an ordered pair denoting a lower bound on the probabilities of agent $i$ winning first and second places under strategy profile $(\sigma,S_{-i})$, conditioned on the event that at time $t$ (after observing the award $v_t$, but before making selections in time $t$) agent $i$ is active, there are $\ell_t$ active agents (including agent $i$), and the maximal and second maximal awards up to time $t$ are available. 
Similarly, let $B_t^\ell$ be a lower bound on the probabilities of agent $i$ winning first and second places under strategy profile $(\sigma,S_{-i})$, conditioned on the event that at time $t$, agent $i$ is active, there are $\ell_t$ active agents (including agent $i$), and the maximal award up to time $t$ is available, but the second maximal award is not available.

Let $C_t^\ell$ be a lower bound on the probabilities of agent $i$ winning first and second places under strategy profile $(\sigma,S_{-i})$, conditioned on the event that {\em after the allocations} of time $t$, agent $i$ is active, there are $\ell_t$ active agents (including agent $i$), and the maximal award up to time $t$ is not available, but the second maximal award is available.
Finally, let $D_t^\ell$ be a lower bound on the probabilities of agent $i$ winning first and second places under strategy profile $(\sigma,S_{-i})$, conditioned on the event that after the allocations of time $t$, agent $i$ is active, there are $\ell_t$ active agents (including agent $i$), and none of the maximal and the second maximal awards up to time $t$ are available.

In Lemma \ref{lem:atbt} we lower bound the above terms. 
The full proof of the lemma is deferred to the appendix.
\begin{lemma} \label{lem:atbt}
For every $t$, and every $\ell_t>1$, it holds that:
\begin{itemize}
	
	\item $A_t^{\ell_t} \geq \left(\frac{1}{\ell_t},\frac{1}{\ell_t}\right)$
	\item 	$B_t^{\ell_t} \geq \left(\frac{1}{\ell_t},\frac{(n-t)(n+t-1)}{n(n-1)\ell_t}\right)$
	\item $C_t^{\ell_t} \geq \left(\frac{n-t}{n\ell_t},\frac{n^2+t^2-tn-n}{n(n-1)\ell_t}\right)$
	\item 	$D_t^{\ell_t} \geq \left(\frac{n-t}{n\ell_t},\frac{n-t}{n\ell_t}\right)$
\end{itemize}
\end{lemma}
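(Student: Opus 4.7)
My plan is to establish all four inequalities simultaneously by backward induction on $t$, since the quantities $A_t^{\ell_t}$, $B_t^{\ell_t}$, $C_t^{\ell_t}$, and $D_t^{\ell_t}$ refer to each other through the transitions induced by new arrivals and opponent actions. The base case is $t = n$, where $\sigma$ prescribes selecting the maximum available award and no further awards arrive. In case $A$ at $t = n$ the max-so-far equals the overall maximum; in the worst case every other active agent also bids for it, and random tie-breaking gives agent $i$ at least a $1/\ell_t$ chance of first place. Conditioning on losing the tie-break (probability $1 - 1/\ell_t$), the continuation is case $C$ at $t=n$ with $\ell_t - 1$ agents, in which the old second-max now plays the role of the new max; repeating the argument yields an additional $1/(\ell_t-1)$ chance, so the total second-place probability is $\geq (1 - 1/\ell_t)\cdot 1/(\ell_t-1) = 1/\ell_t$, as required. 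The base cases for $B_n^{\ell_n}$, $C_n^{\ell_n}$, and $D_n^{\ell_n}$ are handled analogously, and one checks that the claimed formulas evaluate to the right values at $t = n$ (in particular the $(n-t)/n$-factors degenerate to $0$, which is always a valid lower bound).

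For the inductive step at $t < n$, I would advance $\sigma$ by one step while worst-casing over $S_{-i}$. In cases $A$ and $B$ (before allocations at time $t$), $\sigma$ either selects the max (when $t > n/\ell_t$, or at the special boundary $t = n/2$ with $\ell_t = 2$) or passes. In the selecting sub-case, with probability $\geq 1/\ell_t$ agent $i$ wins the tie-break and becomes inactive; otherwise she transitions into case $C$ at time $t$ with $\ell_t - 1$ agents, where the inductive hypothesis applies. In the passing sub-case, the worst-case adversary's choices induce transitions into one of $A, B, C, D$ at time $t$ with an updated $\ell_t$, depending on whether any opponent grabs the max, the second max, some other available award, or nothing. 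For cases $C$ and $D$ (after allocations at time $t$), the transition is to observe $v_{t+1}$, which is uniformly one of the remaining $n - t$ unseen awards; its rank determines whether it becomes a new max-so-far, a new second-max-so-far, or neither, and thereby which of $A_{t+1}, B_{t+1}$ is the resulting pre-allocation state. Applying the inductive hypothesis at time $t+1$ and combining via total expectation closes the recursion.

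The main obstacle I anticipate is the adversarial minimization together with the coupled first- and second-place bounds: for every opponent move at every state one has to verify that the resulting mixture of inductive-hypothesis values at the successor states is at least the claimed target, and this must work simultaneously for both coordinates. The specific algebraic forms $(n-t)(n+t-1)/(n(n-1))$ appearing in $B_t^{\ell_t}$ and $(n^2+t^2-tn-n)/(n(n-1))$ in $C_t^{\ell_t}$ are suggestive of telescoping sums over the ranks of the future arrivals $v_{t+1},\dots,v_n$, so I expect the induction to reduce, at each step, to verifying a small number of algebraic identities rather than genuinely new analytic content. Two further boundaries need care: the threshold $t = n/\ell_t$ where $\sigma$'s regime changes, and the optional rules in $\msi$ at $t = n/2$, $\ell_t = 2$ and at $\ell_t = 1$; each of these must be separately checked to ensure the recursion remains consistent across the regimes.
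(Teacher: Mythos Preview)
Your backward-induction plan is the same high-level approach the paper takes, and your description of the transitions (select vs.\ pass, tie-break, new arrival determining the next state) matches theirs. However, there is one concrete gap that would make the recursion fail as written.

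You set up the induction on $t$ alone and then, in the selecting sub-case, appeal to ``the inductive hypothesis'' for $C_t^{\ell_t-1}$. That quantity lives at the \emph{same} time $t$ with one fewer agent, so an induction on $t$ alone does not give it to you. The paper handles this by running a double induction, on $t$ backward and on $\ell_t$ upward, with an explicit base case at $\ell_t=1$. Crucially, the $\ell_t=1$ values are computed directly and are \emph{not} instances of the lemma's formulas: e.g.\ $C_t^1=\bigl(\tfrac{n-t}{n},\tfrac{t}{n}\bigr)$ and $D_t^1=\bigl(\tfrac{n-t}{n},\tfrac{t(n-t)}{n(n-1)}\bigr)$, whose second coordinates differ from the $\ell_t>1$ expressions. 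Because of this, the inductive step must treat $\ell_t=2$ separately from $\ell_t>2$: when $\ell_t=2$ and agent $i$ loses the tie-break, the continuation is governed by the $\ell_t=1$ formulas, not by the lemma's bounds, and the algebra that verifies $A_t^2\geq(\tfrac12,\tfrac12)$ and $B_t^2\geq(\tfrac12,\cdot)$ uses those special values. Your outline does not mention the $\ell_t=1$ base layer at all (your remark about ``the optional rules at $\ell_t=1$'' concerns the flexibility in $\msi$, which is a different issue), and without it the step at $\ell_t=2$ cannot be closed.

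Once you add the $\ell_t=1$ base cases and split $\ell_t=2$ versus $\ell_t>2$ in the step, the rest of your plan---worst-casing over the number $\ell'$ of opponents who also bid, passing transitions into $\min(C_t^{\ell_t-1},A_{t+1}^{\ell_t})$, and the $\tfrac{1}{t+1}$/$\tfrac{t}{t+1}$ splits for $C$ and $D$---is exactly what the paper does, and the ``telescoping'' you anticipate is indeed just a handful of algebraic identities at each node.
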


\begin{proof}[Proof sketch]
We observe that the winning probability of an agent depends on the time step $t$, the number of active agents $\ell_t$, and whether the maximal award so far is available or not.
If at time $t$ an agent receives the maximal award up to time $t$, she wins with probability $\frac{t}{n}$ (which is the probability that this award is the global maximum).
If another agent receives the maximal award up to time $t$, then by symmetry, each remaining active agent can guarantee a winning probability of $\frac{n-t}{n(\ell_t-1)}$.
Thus, selecting the maximal award so far is better whenever $\frac{t}{n} > \frac{n-t}{n(\ell_t-1)}$, and passing is better whenever $\frac{t}{n} < \frac{n-t}{n(\ell_t-1)}$.
In cases where $\frac{t}{n} = \frac{n-t}{n(\ell_t-1)}$, both passing and selecting the maximal award so far give a winning probability of $\frac{1}{\ell_t}$, and the agents break this tie based on the probability of winning the second place.
For the four states of whether  the maximal and second maximal awards so far are available, we establish lower bounds on the probabilities of winning first and second places, by induction on $t$ and $\ell_t$.  
\end{proof}
We are now ready to prove Theorem~\ref{thm:k-guarantee}.
\begin{proof}[Proof of Therorem~\ref{thm:k-guarantee}]
It follows from the proof of Lemma~\ref{lem:atbt} that for every strategy that is not in $\msi$, if each agent $j\neq i$ plays according to a strategy in $\msj$,  agent $i$'s utility is smaller than $\left(\frac{1}{k},\frac{1}{k},0,\ldots,0,\frac{k-2}{k}\right)$.
It also shows that if agent $i$ plays according to a strategy in $\msi$, and there exists an agent $j$ that plays according to a strategy not in $\msj$, is greater than $\left(\frac{1}{k},\frac{1}{k},\frac{k-2}{k},0,\ldots,0\right)$.

Thus, the only SPEs are profiles in which each agent $j$ plays according to a strategy in $\msj$, and by symmetry, we get that the utility of agent $i$ is exactly $\left( \frac{1}{k},\ldots,\frac{1}{k}\right)$, as desired.
\end{proof}

We next show that despite the competition, the social welfare of the SPE where all agents play according to $\sigma_i$ is at least $\frac{k-1}{k}$ of the maximal possible social welfare. 
We use $y_i$ to denote the $i^{th}$ maximal value among $\{v_j\}_j$.
The optimal social welfare is $\OPT=\sum_{i=1}^k y_i$.
\begin{theorem}\label{thm:sec_rand_SW}
	The expected sum of the allocated awards in the SPE profile  $S=(\sigma_1,\ldots,\sigma_k)$ is at least $\frac{k-1}{k}\cdot \OPT$.
\end{theorem}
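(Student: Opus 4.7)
The plan is to rewrite the expected social welfare as
\[
\E[\mathrm{SW}(S)] \;=\; \sum_{i=1}^{n} y_i \cdot q_i, \qquad q_i := \Pr[y_i \text{ is assigned under } S=(\sigma_1,\ldots,\sigma_k)],
\]
and --- since $q_i \ge 0$ for all $i$ --- reduce the claim to $\sum_{i=1}^{k} y_i q_i \ge \frac{k-1}{k}\sum_{i=1}^{k}y_i = \frac{k-1}{k}\,\OPT$. I will derive this from two structural facts about the SPE: \textbf{(M) Monotonicity,} $q_1\ge q_2\ge\cdots\ge q_k$; and \textbf{(C) Coverage,} $\sum_{i=1}^{k} q_i \ge k-1$. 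Granted (M) and (C), applying Chebyshev's sum inequality to the two identically-ordered non-increasing sequences $(y_i)_{i=1}^{k}$ and $(q_i)_{i=1}^{k}$ gives
\[
k\cdot\sum_{i=1}^{k} y_i q_i \;\ge\; \Big(\sum_{i=1}^{k}y_i\Big)\Big(\sum_{i=1}^{k}q_i\Big) \;\ge\; (k-1)\,\OPT,
\]
which is the theorem.

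For (M), the key observation is that each $\sigma_j$ consults the history only through ordinal information --- the time $t$, the number $\ell_t$ of active agents, and whether the current arrival is the max-so-far (and whether that max-so-far is available) --- and never through the raw values. Hence $q_i$ depends only on the rank $i$. I would then, for each $i$, set up a measure-preserving involution on arrival permutations that swaps the arrival times of $y_i$ and $y_{i+1}$. A short case analysis on whether either of the two swapped awards is ever the max-so-far pairs each realization where $y_{i+1}$ is selected with a realization where $y_i$ is selected under the swap, yielding $q_i\ge q_{i+1}$.

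For (C), I would first argue that the number of top-$k$ awards left unselected equals the number of pre-$n$ selections of non-top-$k$ awards (the \emph{wastes}): after time $n$, the $\ell_n$ remaining agents sequentially pick the max-available, and that is always a top-$k$ item so long as any top-$k$ remains in the pool; a counting check shows the post-$n$ picks exactly exhaust the remaining top-$k$ awards. Hence $\sum_{i=1}^{k}q_i = k - \E[W]$, where $W$ is the number of wastes, reducing (C) to $\E[W]\le 1$. A waste can only occur strictly before the first top-$k$ arrival at time $\tau$, since after $\tau$ the max-so-far is permanently top-$k$. Moreover, each sub-window $(n/\ell, n/(\ell-1)]$ admits at most one selection (a selection there drops $\ell_t$ and forces the game to wait for the next threshold), and a waste in sub-window $\ell$ requires both that the max of $[1,\lceil n/\ell\rceil+1]$ is non-top-$k$ \emph{and} that this max was a new record in $(n/(\ell+1), n/\ell]$ (otherwise it was already taken at the previous threshold). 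Conditioning on $\tau$ and summing these joint probabilities over $\ell = k, k-1, \ldots, 2$ gives $\E[W]\le 1$.

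The main obstacle will be establishing the sharp bound $\E[W]\le 1$ in (C). A naive union bound, using only the marginal probability $\Pr[\text{max of }[1,\lceil n/\ell\rceil+1]\text{ is non-top-}k] \approx \bigl((\ell-1)/\ell\bigr)^k$ per sub-window, sums to $\Theta(\log k)$ and is therefore too loose. The tighter analysis must combine the ``non-top-$k$ max-so-far'' event with the ``new record in the short interval of length $n/(\ell(\ell-1))$'' event, so that the contribution of sub-window $\ell$ decays fast enough in $\ell$ for the series to sum to at most $1$; this telescoping is where most of the technical care will lie.
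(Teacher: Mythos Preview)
Your overall architecture --- prove (M) monotonicity of $q_i$ via a swap on arrival permutations, prove (C) $\sum_{i\le k} q_i \ge k-1$ by showing $\E[W]\le 1$ for the number $W$ of ``wastes,'' and combine the two by Chebyshev --- is exactly the paper's proof (the paper states the Chebyshev step in one sentence without naming it). Your reductions ``unselected top-$k$ awards $=$ non-top-$k$ selections'' and ``every waste occurs strictly before $\tau$'' are also correct and match the paper.

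Where your plan goes wrong is the sub-window accounting for (C). The claim ``each sub-window $(n/\ell,\,n/(\ell-1)]$ admits at most one selection'' is false. If no new record appears in $(t_1,\,n/(k-2)]$, sub-window $k-1$ sees no selection and the game enters sub-window $k-2$ still with $k-1$ active agents; two records landing inside $(n/(k-2),\,n/(k-3)]$ then each trigger a selection there, and if $\tau$ lies beyond both, both are wastes. For the same reason your companion claim that ``the max selected in sub-window $\ell$ was a new record in $(n/(\ell+1),\,n/\ell]$'' is incorrect --- the relevant record can land inside sub-window $\ell$ itself. (Incidentally, the naive union bound $\sum_{\ell=2}^{k} (1-1/\ell)^k$ is $\Theta(k)$, not $\Theta(\log k)$.)

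The paper sidesteps the whole sub-window bookkeeping with a single clean observation: every waste after the first must pick a \emph{strictly larger} award than all previous selections, hence a record that arrived after the first selection time $\lfloor n/k\rfloor+1$. Thus $W \le 1 + B$, where $B$ is the number of records in $(\lfloor n/k\rfloor,\,\tau)$. One then computes
\[
\E[W] \;\le\; \Pr\!\Big[\tau>\tfrac{n}{k}\Big] \;+\; \sum_{t>n/k}\frac{\Pr[\tau>t]}{t}
\;\le\; \Big(1-\tfrac{1}{k}\Big)^{k} + \frac{k}{n}\sum_{t>n/k} e^{-tk/n} \;<\; 1,
\]
using $\Pr[\tau>t]\le (1-t/n)^k\le e^{-tk/n}$ and $1/t\le k/n$. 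No telescoping or per-window localization is needed --- the ``new record'' idea you were reaching for is already the whole argument once you stop trying to pin records to specific windows.
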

\begin{proof}
	Let $X$ be the random variable denoting the first time in which one of the top $k$ awards appears, let $A$ be the number of awards selected in the profile $S$ before time $X$, and let $B$ be the number of awards $v_t$ for $t\in[\lfloor\frac{n}{k}\rfloor+1,X-1]$ that are the maximal so far upon their arrival. 
	Notice that at most one award is selected up to time $\frac{n}{k}$, and any award that gets selected at time $t>\frac{n}{k}$ is the best award so far, thus, $B+1\geq A$. 
	We start by bounding the expectation of $A$. 
	\begin{eqnarray}
	\nonumber
	\E[A] & = & \E\left[A \mid X>\frac{n}{k}\right] \cdot \Pr\left[X>\frac{n}{k}\right]+
	\E\left[A \mid X\leq \frac{n}{k}\right] \cdot \Pr\left[X \leq \frac{n}{k}\right]\\\label{eq:1}
 & \leq & \E\left[B+1 \mid X>\frac{n}{k}\right] \cdot \Pr\left[X>\frac{n}{k}\right] + 0 \\\nonumber
	&=& 
	 \Pr(X > \frac{n}{k})\cdot 1 + \sum_{x=\lfloor\frac{n}{k}\rfloor+1}^{n}\Pr(X=x)\cdot\sum_{t=\lfloor\frac{n}{k}\rfloor+1}^{x-1}\frac{1}{t}\label{eq:2}\\
	&=& \Pr(X > \frac{n}{k}) + \sum_{t=\lfloor\frac{n}{k}\rfloor+1}^{n}\Pr(X> t)\cdot\frac{1}{t}\\\nonumber
	&\leq &  \left(\frac{n-\frac{n}{k}}{n}\right)^k + \sum_{t=\lfloor\frac{n}{k}\rfloor+1}^{n}\frac{1}{t} \left(\frac{n-t}{n}\right)^k\\\nonumber
	& \leq & \frac{1}{e}+\sum_{t=\lfloor\frac{n}{k}\rfloor+1}^{n}\frac{e^{-\frac{tk}{n}}}{t}  \\\nonumber
	& \leq & \frac{1}{e}+\frac{k}{n}\sum_{t=\lfloor\frac{n}{k}\rfloor+1}^{n}e^{-\frac{tk}{n}}  \\ 
	& \leq & \frac{2.7}{e} <1,\label{eq:3}
	\end{eqnarray}
	where Inequality~\eqref{eq:1} is since if $X\leq \frac{n}{k}$, then $A=0$, and since  $B+1\geq A$. 
	Equality~\eqref{eq:2} is true since $v_t $ is the maximal so far with probability $\frac{1}{t}$.  
	Inequality~\eqref{eq:3} holds for  every $k\leq n$.
	We conclude that in expectation, less than one award among the top $k$ awards is not selected. Since the arrival order is uniform and the algorithm decisions depend only on the current ranks of awards and not on their actual values,
the probability that award $y_j$ (i.e., the $j$th highest award) is chosen is monotonically decreasing in $j$. Monotonicity follows by the fact that given an order of awards such that $y_{j+1}$ is chosen while $y_{j}$ is not, if these two awards are switched, then $y_j$ is chosen while $y_{j+1}$ is not. The monotonicity together with the fact that in expectation at least $k-1$ awards out of the highest $k$ awards are chosen, gives expected social welfare of at least $\frac{k-1}{k}\sum_{i=1}^k{y_i}$. 
\end{proof}

We complement this result by showing an instance of awards $y_1,\ldots y_n$ where the social welfare in every SPE is at most $\frac{k-\Omega(1)}{k}\cdot \OPT$ for every $k>1$.
\begin{example}
	Suppose  $y_1 = \ldots =  y_k = 1$ and $y_j=0$ for every $j$ such that $k<j\leq n$. 
	 In every SPE, the first selection is made at time no later than $t=\lfloor \frac{n}{k}\rfloor + 1$. If none of the top $k$ awards appeared up to time $t$, at least one of the agents gets an award of 0. The probability that none of $y_1,\ldots y_k$ appeared by time $t$ is approximately $(\frac{k-1}{k})^k=\Omega(1)$. Thus, the expected social welfare is at most $\frac{k-\Omega(1)}{k}\cdot \OPT$.
\end{example}
	
	\section{Ranked Tie-Breaking}
	\label{sec:secretary-backward-fixed}

In this section, we study competition under the ranked tie-breaking rule.

We first claim that it is without loss of generality to assume that for every agent, the winning probability equals to the probability of receiving the highest award. To show this, we observe that whenever exactly one agent is active, she may as well wait until time $t=n$ and only then select the maximal award without harming her utility. Thus, it can be assumed that the maximal award is always allocated, and the winning agent receives it. Thus, we may assume that the first-order objective of every agent is to maximize the probability of receiving the maximal award, as in the standard secretary problem and previous multi-agent extensions.

In Section~\ref{subsec:general} we present general observations regarding equilibria in this setting. 
We then characterize the equilibrium in the 3-agent game in Section~\ref{subsec:2-3-agents}.
In Section~\ref{subsec:winning} we show that as the number of competing agents goes to infinity, the agents' probabilities of receiving the highest award (which equal to the agents' winning probabilities) converge to the corresponding probabilities  in the immediate decisions model described by \citet{karlin2015competitive}.


\subsection{General Observations}
\label{subsec:general}

We first make observations about the structure of the subgame perfect equilibria (SPE) of the game.

\begin{proposition}
A strategy profile $S=(S_1,\ldots,S_n)$ is an SPE if for every agent $i$, $S_i$ is described by a set of time thresholds $T_j^\ell$ for every $j,\ell$ such that $1\leq j \leq \ell \leq k$. 
At time $t$, agent $i$ selects the highest award so far if it is available and $t \geq T_j^\ell$, where the current number of active agents is $\ell$ and agent $i$ is ranked $j^{th}$ among them\footnote{If $t = T_j^\ell$, then the agent is indifferent between selecting and passing.}. In addition, if $\ell$ agents are active and $t > n-\ell$, then the lowest-ranked active agent makes a selection, even if the highest award so far is not available.  
\end{proposition}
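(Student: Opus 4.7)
The plan is to proceed by backward induction on the current time $t$, simultaneously defining the thresholds $T_j^\ell$ and verifying that the induced profile is an SPE in every subgame. I would first invoke the observation opening this section, which reduces each agent's first-order objective to the probability of winning the global maximum, together with the symmetry of the uniformly random arrival order, to argue that the payoff-relevant state at any decision point collapses to the tuple $(t, \ell, j, b)$, where $b$ indicates whether the max-so-far award is still available. This collapses each pure strategy to a function of these four variables. A second preliminary observation is that, at any $t < n$, selecting an available award that is not the max-so-far is weakly dominated by passing: such an award has zero probability of being the global maximum, while passing keeps open the option of claiming a future max. Thus, up to the boundary clause, every non-boundary decision reduces to the binary choice of selecting the max-so-far versus passing.

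I would then extract the thresholds. Fix $(j, \ell)$ and let $p(t, \ell, j)$ denote the SPE winning probability of a rank-$j$ active agent who passes at time $t$ with $b=1$, as computed from the inductive hypothesis on later times. Selecting yields winning probability $t/n$, so the agent weakly prefers selecting iff $\phi_{\ell, j}(t) := t/n - p(t, \ell, j) \geq 0$. I would define $T_j^\ell$ to be the smallest $t$ at which this inequality holds. To conclude that the rule ``select iff $t \geq T_j^\ell$'' is a best response everywhere, I would prove that $\phi_{\ell, j}$ is non-decreasing in $t$, via a coupling of the continuations at consecutive times that exploits the fact that $t/n$ grows linearly while the opportunities for future new maxes shrink.

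For the boundary clause at $t > n - \ell$, I would show that the lowest-ranked active agent (rank $\ell$) must select some available award even when $b=0$. With fewer than $\ell$ new awards still to arrive and ranked tie-breaking against her, she is destined to inherit the worst remaining award if she keeps passing; by claiming a better currently-available award now she strictly improves her lower-order lexicographic payoffs without affecting her first-order winning probability, which is already zero in the absence of the max-so-far. This is what forces a positive action even when the threshold condition on $t \geq T_j^\ell$ with $b=1$ does not apply.

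The principal obstacle is the monotonicity of $\phi_{\ell, j}$ used to pin down the thresholds. Because $p(t, \ell, j)$ is itself defined recursively through the threshold strategies of every agent, and several agents may cross their thresholds simultaneously (thereby changing $\ell$ abruptly from step to step), the coupling between the continuation games at $t$ and $t+1$ must carefully track how these simultaneous transitions interact, especially at times where multiple $T_{j'}^{\ell'}$ coincide. Once monotonicity is established, verifying the one-shot deviation property in each subgame, and hence the SPE claim, follows by a routine backward-induction check.
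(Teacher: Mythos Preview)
Your approach is essentially the same as the paper's: reduce the payoff-relevant state to $(t,\ell,j,b)$, rule out selections of non-max-so-far awards, extract thresholds from the comparison between $t/n$ and the continuation value, and handle the boundary $t>n-\ell$ by an inductive argument on the active agents. If anything you are more careful than the paper, which asserts the needed monotonicity in one sentence without proof; your identification of the monotonicity of $\phi_{\ell,j}$ as the principal obstacle is exactly the step the paper glosses over.
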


\begin{proof}
	Since the objective is lexicographic, agents wish to maximize the probability to win the first place, and only if this is hopeless, they will attempt to win lower places. This implies that agents make selections only if at least one of the following conditions holds: 
	\begin{enumerate}
		\item The selected award has a non-zero probability of being the maximal allocated award.
		\item Winning the first place has zero probability, independent of whether a selection is made. 
	\end{enumerate}
In the first case, an agent can either make a selection of the highest award so far or pass. This decision depends only on the number of awards revealed so far and the number of active agents with higher ranks. For a given number  of active agents, the winning probability is monotonically increasing in the number of revealed awards. Thus, there exists some $T_j^\ell$ such that at time $t$, agent $i$ selects the highest award so far if it is available and $t \geq T_j^\ell$, where $i$ is ranked $j$-th among the $\ell$ active agents.
	 
The second case can only occur if the number of active agents exceeds the number of remaining awards.
We show by induction that if there are $\ell$ active agents and $t > n-\ell$, the best strategy of the lowest ranked active agent is to select the best available award. for the base of the induction, if a single agent remains at time $t>n-1$, she clearly makes a selection. For the induction step, suppose $t > n-\ell$. The lowest ranked agent among the $\ell$ active agents knows (by the induction assumption) that at any future time step $t\leq n$, a selection is going to be made by a higher ranked agent. Thus, she clearly makes a selection.
\end{proof}

We proceed with several observations about the time thresholds in the SPE of the game.
 
Since any agent can always mimic the strategy of an agent ranked lower than her, in equilibrium a lower-ranked agent would be willing to receive any award that a higher-ranked agent would be willing to receive.
In the threshold terminology, it means that:
\begin{observation} \label{obs:monotonicity}
For any number of active agents $\ell$, for every pair of ranks $h,j$ such that $h < j \leq \ell$, without loss of generality it holds that $T_j^\ell \leq T_{h}^\ell$.
\end{observation}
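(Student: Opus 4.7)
The plan is to prove Observation~\ref{obs:monotonicity} via an outcome-equivalence argument that formalizes the mimicking intuition given in the preceding paragraph. Suppose some SPE $S$ exhibits $T_j^\ell > T_h^\ell$ for some $h < j \leq \ell$. I would consider the modification of agent $j$'s strategy in which, at every state $(t, \ell, j)$ with $t \in [T_h^\ell, T_j^\ell)$ and the current max-so-far available, agent $j$ selects the max-so-far instead of passing; her behavior in all other states is left unchanged. The aim is to show this modification is outcome-equivalent on every sample path, so the modified profile is still an SPE and satisfies $T_j^\ell \leq T_h^\ell$.

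The key step I would carry out first is the following coupling observation. Whenever agent $j$ is in state $(t, \ell, j)$, the fact that $j$ is ranked $j$-th among the $\ell$ active agents forces all globally higher-ranked agents $1, \ldots, j-1$ to be active as well, and in particular agent $h$. Hence agent $h$ is simultaneously in state $(t, \ell, h)$, and since $t \geq T_h^\ell$ and $S$ is an SPE, agent $h$ selects the max-so-far. Under ranked tie-breaking, $h$'s higher rank wins the tie, so the award is assigned to $h$ regardless of whether $j$ also attempts to select. After this step, the original and the modified histories have the same allocations, the same set of active agents, and the same available awards.

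It follows that all subsequent play coincides on every sample path, so $p^j$—and likewise each other agent's probability vector—is unchanged. The modified strategy for $j$ therefore yields the same utility vector and remains a best response in every subgame, so the profile stays an SPE. Iterating the modification over all violating triples $(h, j, \ell)$ produces an SPE satisfying $T_j^\ell \leq T_h^\ell$ for every $h < j \leq \ell$, establishing the "without loss of generality" claim. The main subtlety I expect is verifying outcome-equivalence in every subgame and not only along the equilibrium path—that is, ensuring that whenever $j$'s threshold $T_j^\ell$ is invoked, agent $h$ is genuinely present and obeys her own SPE threshold $T_h^\ell$. This follows from the definition of the state $(t, \ell, j)$, which by construction pins down the set of active agents ranked above $j$ to be exactly $\{1, \ldots, j-1\}$, together with the SPE optimality of $h$'s behavior in the corresponding subgame.
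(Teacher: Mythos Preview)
Your outcome-equivalence argument is correct and is essentially a careful expansion of the paper's own justification, which is only the single sentence preceding the observation. Both exploit the same fact: under ranked tie-breaking, whenever the agent currently ranked $h$-th among the $\ell$ active agents is already selecting the max-so-far, the choice of the agent ranked $j$-th ($j>h$) is immaterial, so the latter's threshold may be lowered to $T_h^\ell$ without changing any outcome in any subgame.

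One small expository slip worth fixing: you conflate the index $j$ as a \emph{global} agent identity with $j$ as the \emph{current rank} among the $\ell$ active agents. The sentence ``forces all globally higher-ranked agents $1,\ldots,j-1$ to be active'' only makes sense under the former reading, whereas the thresholds $T_j^\ell$ are indexed by the latter. The cleaner statement is simply that whenever there are $\ell$ active agents, there is by definition some agent currently ranked $h$-th for every $h\le\ell$, and in the given SPE she plays threshold $T_h^\ell$; no appeal to global identities is needed. With this rephrasing your argument applies uniformly to whoever occupies rank $j$, so lowering the threshold $T_j^\ell$ simultaneously for every agent who might occupy that rank preserves the SPE---which is precisely the ``without loss of generality'' content of the observation.
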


The following observation gives bounds on the time threshold of the lowest-ranked active agent relative to the second-lowest ranked active agent. 
\begin{observation}\label{obs:last2threshold}
For any number of active agents $\ell$, it holds that $T_{\ell-1}^\ell \geq T_{\ell}^\ell \geq T_{\ell-1}^\ell -1$.
\end{observation}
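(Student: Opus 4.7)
The statement has two halves. The left inequality $T_{\ell-1}^\ell \geq T_\ell^\ell$ is an immediate instance of Observation~\ref{obs:monotonicity} (set $h=\ell-1$, $j=\ell$), so I will dispose of it in a single line. The real content lies in the right inequality $T_\ell^\ell \geq T_{\ell-1}^\ell - 1$, which I plan to prove by a one-step deviation argument, reminiscent of the classical secretary comparison between ``take now'' and ``wait one more step''.

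The plan is to assume for contradiction that $T_\ell^\ell \leq T_{\ell-1}^\ell - 2$, set $t := T_\ell^\ell$, and consider a history in which the maximum-so-far is available at time $t$, so that agent $\ell$'s prescribed equilibrium action is to select it. By Observation~\ref{obs:monotonicity}, every higher-ranked active agent $h < \ell$ has threshold $T_h^\ell \geq T_{\ell-1}^\ell > t$, so agent $\ell$ is the only one whose threshold has been reached; selecting gives her the award outright and yields first-place probability exactly $t/n$ (the standard secretary identity that the current leader is the global maximum with probability $t/n$). I will then exhibit the deviation: pass at $t$ and select the max-so-far at $t+1$. Two facts make this work. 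First, because no agent selects at time $t$ under the deviation (agent $\ell$ passes; all higher agents' thresholds exceed $t$), the leader at time $t$ is still unassigned at time $t+1$, and if $v_{t+1}$ is a new leader it is obviously unassigned as well. Second, because $t+1 \leq T_{\ell-1}^\ell - 1 < T_{\ell-1}^\ell \leq T_h^\ell$ for every $h<\ell$, at time $t+1$ agent $\ell$ is again the unique active agent whose threshold has been reached, so she successfully grabs the current leader, winning with probability $(t+1)/n > t/n$.

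This strictly increases the first coordinate $p_1$ of agent $\ell$'s outcome vector, hence strictly improves her lexicographic utility regardless of what happens in the lower coordinates, contradicting that $t$ is her equilibrium threshold. The only subtleties to verify carefully are the ``availability'' claim for the leader at time $t+1$ (handled as above) and the fact that $p_1$ strictly increases so lower coordinates play no role in the comparison; neither presents a real obstacle, so I expect the argument to go through in a few lines.
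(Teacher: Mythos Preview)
Your proposal is correct and follows essentially the same approach as the paper. The paper's own proof invokes Observation~\ref{obs:monotonicity} for the left inequality and, for the right one, argues in one sentence that the lowest-ranked agent ``can only benefit from waiting as long as no other active agent makes a selection''; your $t/n$ versus $(t+1)/n$ comparison is exactly the concrete one-step deviation that cashes out that sentence.
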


\begin{proof}
By Observation \ref{obs:monotonicity} we have that $T_{\ell-1}^\ell \geq T_{\ell}^\ell$.
On the other hand, the lowest-ranked active agent never makes a selection before time $\min_{j\neq\ell} T_j^\ell - 1$, because she can only benefit from waiting as long as no other active agent makes a selection. The claim now follows since, by Observation \ref{obs:monotonicity}, $\min_{j\neq\ell} T_j^\ell = T_{\ell-1}^\ell$.
\end{proof}

Recall that the winning probability of agent $i$ under strategy profile $S$ is denoted by $p_1^i(S)$. 
Throughout this section, we make two simplifications in notation. First, we omit $S$. Second, we omit the subscript 1, since we consider only the probability of winning the $1^{st}$ place. 
Consequently, we denote the probability that agent $i$ wins the $1^{st}$ place in strategy profile $S$ by $p_i$.

The following observation gives bounds on the winning probability of the lowest-ranked agent relative to the second-lowest agent.

\begin{observation}
\label{obs:pik2}
It holds that $p_{k-1}-\frac{1}{n}\leq p_k \leq p_{k-1}$. 
\end{observation}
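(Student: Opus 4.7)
The plan is to prove the two inequalities separately, each by a deviation argument that relies on the SPE property together with the threshold structure from Observations~\ref{obs:monotonicity} and \ref{obs:last2threshold}.

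For the upper bound $p_k\leq p_{k-1}$, I would consider the deviation in which agent $k-1$ switches to play agent $k$'s equilibrium strategy $S_k$. Agent $k-1$'s rank is strictly better than agent $k$'s, so playing the same threshold profile from a higher-ranked position can only weakly increase her winning probability. Concretely, I would couple the two games (original versus deviation) on the same random arrival order, and, by tracking the joint selections of the pair $\{k-1,k\}$, argue that in every realization in which agent $k$ wins the overall maximum under $S$, the deviating agent $k-1$ wins the maximum under $(S_{1},\ldots,S_{k-2},S_k,S_k)$: whenever the original agent $k$ needed to win a tie against some active higher-ranked agent, the deviating agent $k-1$ wins that tie automatically, and whenever the original $k$ was the unique selector, the deviating $k-1$ is still the unique selector among the agents with strategy $S_k$. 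Since $S$ is an SPE, this yields $p_{k-1}\geq p_k$.

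For the lower bound $p_k\geq p_{k-1}-\tfrac{1}{n}$, the starting point is Observation~\ref{obs:last2threshold}, which forces $T_k^{k}\in\{T_{k-1}^{k}-1,\,T_{k-1}^{k}\}$ and, more generally, keeps the two lowest-ranked agents' thresholds within one time step of each other at every active-agent count. I would then exhibit a deviation for agent $k$ that imitates the rank-$(k-1)$ strategy but shifted one step earlier, so that at every decision point she acts strictly before the (would-be) rank-$(k-1)$ agent. This shift gives $k$ temporal priority over $k-1$ at every comparable step, so she wins the overall maximum in every arrival order in which agent $k-1$ wins in the original equilibrium, with the single exception of realizations in which the overall maximum arrives exactly at one of the ``boundary'' time steps that the shifted $k$ skips. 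Each such boundary event has probability $\tfrac{1}{n}$ and at most one such step is relevant in any given trajectory, which gives the desired $p_k\geq p_{k-1}-\tfrac{1}{n}$.

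The main obstacle in both directions is formalizing the coupling: since selections happen at different times in the two games, the active-agent count evolves differently, and the thresholds $T_j^{\ell}$ invoked later in the game may not line up between the two scenarios. I expect this to be handled by an induction on the number of active agents (from $k$ down to $1$), verifying at each level that the joint state of the two lowest-ranked agents in the two scenarios remains in a dominance relation that preserves the bound being tracked on the deviating agent's winning probability.
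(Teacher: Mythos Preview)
Your approach is substantially more complicated than the paper's, and the lower-bound half has a real gap.

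The paper's proof is a two-line argument that avoids any trajectory coupling. The very definition of the threshold $T_i^k$ says that agent $i$ is (weakly) willing to take the best-so-far at time $T_i^k$ but not at $T_i^k-1$; since taking the best-so-far at time $t$ wins with probability exactly $t/n$, this immediately sandwiches
\[
\frac{T_i^k-1}{n}\;\le\;p_i\;\le\;\frac{T_i^k}{n}
\]
for $i\in\{k-1,k\}$. Then Observation~\ref{obs:last2threshold} gives $T_k^k\in\{T_{k-1}^k-1,T_{k-1}^k\}$, and a two-case check finishes. No induction on the number of active agents, no tracking of divergent histories.

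Your upper-bound mimicking idea is morally fine, but the lower-bound plan does not go through as written. The ``shift by one'' deviation has agent $k$ select at her first shifted threshold $T_{k-1}^k-1$ and then \emph{exit the game}; there is no further mimicking after that point because she is no longer active. In the original equilibrium, however, agent $k-1$ may receive her award much later (after several other agents have left), so the two processes are not aligned step by step, and the asserted coupling ``$k$ wins in the deviation whenever $k-1$ wins in the original, except at one boundary step'' is simply not established. Concretely, in the case $T_k^k=T_{k-1}^k-1$ your deviation coincides with agent $k$'s equilibrium play and yields only $p_k\ge T_k^k/n$, which says nothing about $p_{k-1}$; you still need the bound $p_{k-1}\le T_{k-1}^k/n$, and that is exactly what the indifference condition at $T_{k-1}^k$ gives you directly. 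The inductive repair you sketch (tracking a dominance relation across active-agent levels) would have to rebuild most of the threshold calculus anyway, so you are better off using the indifference inequalities from the start.
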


\begin{proof}
	By definition of $T_i^k$, the $i^{th}$-ranked agent is willing to select the maximal award among $v_1,\ldots,v_{T_i^k}$ and is not willing to select the maximal award among $v_1,\ldots,v_{T_{i}^k-1}$. 
	Thus, 
	\begin{equation}
	\frac{T_i^k-1}{n} \leq p_i \leq \frac{T_i^k}{n}. \label{eq:pik}
	\end{equation}
	When $T_i^k$ is strictly smaller than $T_{i-1}^k$, agent $i$ can guarantee herself a winning probability of $\frac{T_i^k}{n}$ and the right inequality becomes equality. That is, $p_i = \frac{T_i^k}{n}$. 
	Thus, by Observation~\ref{obs:last2threshold}, either (i) ${T_k^k}={T_{k-1}^k}$, in which case  $p_k,p_{k-1} \in \left[\frac{T_k^k-1}{n},\frac{T_k^k}{n}\right]$, or (ii) 
	${T_k^k}={T_{k-1}^k-1}$, in which case $p_k = \frac{T_i^k}{n}$ and $p_k \leq p_{k-1} \leq \frac{T_{k-1}^k}{n} = \frac{T_{k}^k+1}{n} = p_k + \frac{1}{n}$. The observation follows.
\end{proof}


\subsection{The 3-Agent Game}
\label{subsec:2-3-agents}
In a 2-agent game Observation \ref{obs:pik2} implies that $\frac{1}{n}\geq p_1-p_2\geq 0$. That is, both agents win with probability roughly a half.
%
%
This symmetry breaks as more agents join the game and the setting becomes interesting already in the case of 3 agents. 

Notice that the highest-ranked agent can always guarantee herself a probability of at least $e^{-1}\approx0.37$ to receive the highest award by adopting the optimal strategy in the classical secretary problem.
An interesting question is whether the opportunity to make non-immediate decisions increases this probability for the highest-ranked agent. 

We show that in a game with 3 agents this advantage exists, but is very small. That is, the winning probability of the highest-ranked agent is nearly the same as in the immediate decision model. More accurately, we show that:
\begin{theorem}\label{thm:three_agents}
	In a setting with 3 agents, in any SPE, agent $1$ wins with probability $\approx0.38$, while each of agents $2,3$ wins with probability $\approx0.31$.
\end{theorem}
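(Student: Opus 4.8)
The plan is to pin down the equilibrium time thresholds and then read off the winning probabilities, proceeding by backward induction on the number of active agents. By the reduction at the start of Section~\ref{sec:secretary-backward-fixed} it suffices to compute, for each agent, the probability that she is the one who ends up holding the global maximum. By the Proposition together with Observations~\ref{obs:monotonicity}--\ref{obs:pik2}, every SPE is described by thresholds $T_j^\ell$; for three agents Observation~\ref{obs:last2threshold} forces $T_2^3$ and $T_3^3$ to agree up to $1/n$, so in the continuous ($n\to\infty$) normalization the free parameters are $a=T_1^3/n$ and $b=T_2^3/n=T_3^3/n$, together with the thresholds governing the two-agent subgames that arise after an elimination.

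First I would solve the two-agent subgame that is reached after one agent has been removed. I would condition on the arrival time $x$ of the global maximum (which, conditioned on the subgame being reached, is uniform on the surviving interval) and track which of the two remaining agents selects the maximum. The crucial input is the non-immediate dynamics: a record that is passed over stays available, a \emph{lone} agent optimally waits until the end and then takes the maximum, and simultaneous selections are resolved in favour of the higher-ranked agent. The resulting capture probabilities are integrals over $x$ of ``first-record'' probabilities of the form $\tfrac{T}{x}$, and this is precisely where logarithms appear: the chance that the higher-ranked survivor intercepts the maximum contributes a term $\int \tfrac{T}{x}\,dx$, whose evaluation over a range whose endpoints are in ratio $2$ yields $\ln 2$.

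Next I would compose this subgame value with the first-elimination phase, in which the two lowest-ranked agents act at the common threshold $b$ and ties are broken toward agent~$2$, writing $p_1,p_2,p_3$ explicitly as functions of $a$ and $b$. Imposing the equilibrium conditions---each threshold is the indifference point at which selecting the current maximum gives the same first-place probability as passing, sharpened by Observation~\ref{obs:pik2} which controls $p_2-p_3$---gives a small transcendental system in $a$ and $b$ involving $\ln 2$. Solving it should produce $p_1=\tfrac{4-4\ln 2}{6-4\ln 2}$ and $p_2=p_3=\tfrac{1}{6-4\ln 2}$; the identity $p_1+2p_2=1$ furnishes a useful consistency check, and matches the intuition that agent~$1$'s value lies just above the classical-secretary guarantee $e^{-1}$. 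Finally, to obtain the ``in any SPE'' strength of Theorem~\ref{thm:three_agents}, I would argue that although the indifference conditions may leave a range of admissible thresholds (off-by-one and boundary choices), every such choice induces the same first-place probabilities, so the probability vector is invariant across the entire set of SPE.

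I expect the main obstacle to be the second step: the exact bookkeeping of \emph{who} captures the global maximum in each subgame under the non-immediate rule---accounting for previously revealed records remaining selectable, the lone-agent waiting principle, and the ranked tie-break---since it is this case analysis that fixes the integration ranges and hence whether the $\ln 2$ terms come out with the right coefficients; establishing uniqueness of the probabilities over all SPE is the secondary difficulty.
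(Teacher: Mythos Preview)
Your proposal is correct and follows essentially the same route as the paper: fix the two-agent subgame threshold at $n/2$, express $p_2$ and $p_3$ in terms of the common threshold $b=T_2^3/n=T_3^3/n$, and use Observation~\ref{obs:pik2} to equate them and solve $b=1/(6-4\ln 2)$. One simplification you will discover is that the parameter $a=T_1^3/n$ never enters the analysis---agents~$2$ and~$3$ always make the first selection at time $bn$, so the game immediately drops to the two-agent subgame before agent~$1$'s three-agent threshold is reached; the paper therefore skips $a$ entirely, writes $p_2=\tau/n$ directly, obtains $p_3$ from the two cases (global max lands in $(\tau,n/2)$, or agent~$1$ selects a non-max record after $n/2$), and recovers $p_1$ from $p_1+p_2+p_3=1$.
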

\begin{proof}
We start with the case where $T_1^2=T_2^2=\frac{n}{2}$, and $T_2^3=T_3^3$ and later on show how to handle the cases where the equality is broken. Let $\tau = T_2^3 = T_3^3$.
	At time $\tau$ both agents $2,3$ select the maximal award so far and it is allocated to agent $2$. 
	Hence, agent 2's winning probability is: 
	\begin{equation}
	p_2=\frac{\tau}{n}. \label{eq:p23}
	\end{equation}
	Agent $3$ wins in two cases. Case 1: The maximal award arrives at time $t$  for $\tau < t <\frac{n}{2}$. Case 2: Agent $3$ does not select an award before time $\frac{n}{2}$ (this happens with probability $\frac{\tau}{n/2}$), agent $1$ selects an award $v_s$ at time $s\geq \frac{n}{2}$ (such selection is made if and only if $v_s$ is maximal so far, which happens with probability $\frac{1}{s}$) 
	, and the maximal award arrives later than time $s$ (this happens with probability $1-\frac{s}{n}$). Hence:  
	\begin{equation}
	p_3=\frac{n/2-\tau}{n} + \frac{\tau}{n/2} \cdot\sum_{s=n/2+1}^{n}{\frac{1}{s}\left( 1-\frac{s}{n}\right)} \approx \frac{1}{2} - \frac{\tau}{n}+2\left( \ln 2 - \frac{1}{2}\right)\frac{\tau}{n} . \label{eq:p33}	
	\end{equation}
	Combining the fact that $p_2\approx p_3$ (by Observation~\ref{obs:pik2}) with Equations \eqref{eq:p23} and \eqref{eq:p33} gives:
	\begin{eqnarray*}
		\frac{\tau}{n}  \approx \frac{1}{2} - \frac{\tau}{n}+2\left( \ln 2 - \frac{1}{2}\right)\frac{\tau}{n}.	
	\end{eqnarray*}
	Solving for $\tau$, we get $\tau =\frac{n}{6-\ln 16}\approx 0.3098n$.
	The assertion of the theorem follows.
	
	In the proof we assumed that $T_2^3=T_3^3$.	By Observation~\ref{obs:last2threshold}, the equality is sometimes broken and we have that $T_2^3=T_3^3+1$. In this case, agent $3$ makes the first selection and the "names" of agents $2$ and $3$ are switched in the remainder of the proof. That is, Equation~\ref{eq:p23} and~\ref{eq:p33} denote the winning probabilities of agent $3$ and $2$ respectably. The same comment applies with respect to the two active agents at time $\frac{n}{2}$. 
	
\end{proof}


\subsection{Immediate vs. Non-Immediate Selection Models}
\label{subsec:winning}
In this section, we compare the immediate and non-immediate models for games with a large number of agents.
Let $p_{i,k}$ denote the probability that agent $i$ wins in a $k$-agent game, with non-immediate selections, and let $q_i$ denote the probability that agent $i$ receives the highest award in a game with immediate selections. 
We note that under immediate selections, $q_i$ is independent of the number of agents in the game. 

The main result here is that agents' winning probabilities in equilibrium under non-immediate selections approaches their winning probabilities under immediate selections, as the number of agents grows.

\begin{theorem}\label{thm:sec lex non immidate goes to immediate}
	For every $i$  $$ \lim_{k\rightarrow \infty}p_{i,k} = q_i.$$   
\end{theorem}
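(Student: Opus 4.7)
The approach is a squeeze. Fix a rank $i$ and let $k$ grow; the plan is to sandwich $p_{i,k}$ between $q_i$ from below and a quantity tending to $q_i$ from above. Two ingredients are easy. First, because the events ``agent $j$ wins'' (equivalently, receives the global maximum, by the opening remark of Section~\ref{sec:secretary-backward-fixed}) are pairwise disjoint across $j$, we have $\sum_{j=1}^{k} p_{j,k}\le 1$ and $\sum_{j=1}^{k} q_j\le 1$. Second, from the analyses of \cite{matsui2016lower,karlin2015competitive,EzraFN18} one can extract that, in the immediate-decisions model, the probability the global maximum is allocated to \emph{some} agent tends to $1$ as $k\to\infty$, which gives $\sum_{j=1}^{k} q_j\to 1$.

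The core step is the lower bound $p_{j,k}\ge q_j$ for every $j\le k$. Following the informal sketch, let agent $j$ deviate inside the non-immediate game to the strategy $\sigma_j^{\mathrm{imm}}$ she would play in the immediate-decisions equilibrium. Because $\sigma_j^{\mathrm{imm}}$ acts only at the instant an award arrives, the winning event under the deviating profile can be described directly in terms of the arrival order and the pointwise behaviour of the higher-ranked agents: agent $j$ wins iff the global maximum arrives at some time $t^{*}\ge T_j^{\mathrm{imm}}$, agent $j$ is still active at $t^{*}$, and no higher-ranked agent simultaneously selects $v_{t^{*}}$ at time $t^{*}$. I would then couple the non-immediate game with the immediate game on the same arrival order and argue, step by step, that any higher-ranked agent's probability of selecting $v_t$ at the precise instant $t$ is no larger in the non-immediate equilibrium than in the immediate one; the intuition is that delaying is an additional option only in the non-immediate game, and exercising it can only move selections \emph{away} from arrival instants, reducing the amount of instantaneous blocking imposed on agent $j$ at each step. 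This yields that the mimicking winning probability is at least $q_j$, and by the best-response property of the non-immediate SPE, $p_{j,k}$ is at least the mimicking winning probability.

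Combining the three inputs, $q_j\le p_{j,k}$, $\sum_{j=1}^{k} p_{j,k}\le 1$, and $\sum_{j=1}^{k} q_j\to 1$, the nonnegative gaps $p_{j,k}-q_j$ sum to a vanishing quantity, so each one vanishes individually; in particular $p_{i,k}\to q_i$, as desired.

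The main obstacle is making the coupling rigorous in the lower-bound step. The delicate point is the monotonicity claim that a higher-ranked agent's probability of selecting $v_t$ at the precise instant $t$ is no larger in the non-immediate equilibrium than in the immediate one. The intuition is natural, but comparing two distinct equilibria whose strategies may differ in complex ways requires care; the tool I would try first is a pathwise coupling on the arrival order combined with an ``immediate restriction'' of the non-immediate equilibrium strategy, and the key lemma to establish is that this restriction can only weakly hurt the higher-ranked agent in the lexicographic order, so it does not overstate the amount of instantaneous interference suffered by agent $j$.
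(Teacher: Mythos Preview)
Your squeeze is exactly the paper's argument: the three inputs $p_{j,k}\ge q_j$, $\sum_j p_{j,k}\le 1$ (the paper even has equality here, by the opening paragraph of Section~\ref{sec:secretary-backward-fixed}), and $\sum_j q_j\to 1$ (via the Matsui--Gilbert/EFN connection $\sum_j q_j=\tau_k\to 1$) combine to force each gap $p_{j,k}-q_j$ to vanish. So the architecture is right.

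Where you diverge from the paper is the lower bound $p_{j,k}\ge q_j$, and here you are making life harder than necessary. You propose to compare the \emph{non-immediate SPE} against the \emph{immediate SPE} via a pathwise coupling, and to prove a monotonicity statement about how much ``instantaneous blocking'' the higher-ranked agents impose in the two equilibria. As you yourself flag, this comparison of two different equilibria is delicate; the thresholds $T_{j'}^{\ell}$ in the non-immediate SPE (Proposition~4.1) need not coincide with the Karlin--Lei thresholds $T_j$, and it is not obvious that higher-ranked agents select at arrival instants weakly less often in the non-immediate SPE. The paper sidesteps this entirely. It does not compare two equilibria at all: it asserts that the Karlin--Lei $T_i$-threshold strategy, when played by agent $i$ inside the non-immediate game, guarantees winning probability at least $q_i=T_i/n$ against \emph{arbitrary} strategies of the other agents, invoking only Theorem~\ref{thm:karlin} and the monotonicity $T_{i-1}\ge T_i$. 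Since in SPE agent $i$ best-responds, $p_{i,k}$ is at least this unconditional guarantee, and the coupling is never needed.

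So your plan is not wrong in spirit, but the ``main obstacle'' you identify is an artifact of the route you chose. The cleaner move is to replace the equilibrium-versus-equilibrium comparison by the unconditional guarantee that the immediate-game threshold strategy already provides in the larger (non-immediate) strategy space.
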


Before we prove our main theorem, we restate a result by Karlin and Lei \cite{karlin2015competitive} regarding the immediate decision model.

\begin{theorem}[\citet{karlin2015competitive}]\label{thm:karlin}
	For every $n,k$ and every $i \in [k]$, there is a unique $T_i$ (independent of k) such that agent $i$ plays a $T_i$-threshold strategy in SPE; namely, wait until time $T_i$, then make a selection whenever a best-so-far award appears. It holds that $T_{i-1} \geq T_i$, and $q_i = \frac{T_i}{n}$, for all $i$. 
\end{theorem}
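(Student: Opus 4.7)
The plan is a squeeze argument. I aim to show, for every fixed rank $i$,
\[
q_i \;\leq\; p_{i,k} \;\leq\; q_i + \Bigl(1 - \sum_{j=1}^{k} q_j\Bigr),
\]
and then conclude by establishing $\sum_{j=1}^k q_j \to 1$, so that the parenthesized slack vanishes and $p_{i,k} \to q_i$. The upper inequality follows from two facts: (a) in every non-immediate SPE the maximum is always allocated (argued in the preamble of Section~\ref{sec:secretary-backward-fixed} — the last active agent may as well wait until time $n$), hence $\sum_j p_{j,k} = 1$; and (b) $p_{j,k} \geq q_j$ for every $j$. Combining, $p_{i,k} = 1 - \sum_{j\neq i} p_{j,k} \leq 1 - \sum_{j\neq i} q_j = q_i + (1 - \sum_j q_j)$.

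For the convergence $\sum_j q_j \to 1$, I would invoke \cite{karlin2015competitive,matsui2016lower,EzraFN18}. By Theorem~\ref{thm:karlin}, $q_j = T_j/n$, and $\sum_j q_j$ equals the probability that the global maximum is allocated in the immediate-selection SPE with $k$ agents. The literature establishes that this probability tends to $1$ as $k$ grows, since each additional lower-ranked agent contributes a (small) extra threshold interval whose cumulative measure exhausts $[0,1]$ in the limit. I would reduce this to a concrete limiting statement about the Karlin--Lei recursion for $T_j$.

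The main obstacle is the mimicking bound $p_{j,k} \geq q_j$. The plan is for agent $j$ to deviate, inside the non-immediate game, to her immediate SPE threshold strategy: wait until time $T_j$, then select every best-so-far that arrives. Since $q_j = T_j/n$ is exactly the probability that the best-so-far at time $T_j$ is the global maximum (this equality is the source of the SPE indifference underlying Theorem~\ref{thm:karlin}), agent $j$'s mimicking rule selects the global maximum with probability at least $T_j/n$. The delicate point is tie-breaking at the moment of selection — in principle, a higher-ranked agent could simultaneously select the same award in the non-immediate game even where she would not do so in the immediate SPE. To handle this I would refine the deviation (for example, committing agent $j$ to select only at the single instant $t = T_j$ rather than at any later best-so-far), and then combine the strict threshold monotonicity $T_{j'} > T_j$ for $j' < j$ from Theorem~\ref{thm:karlin} with Observations~\ref{obs:monotonicity}--\ref{obs:last2threshold} to rule out higher-ranked contestants at that exact instant in any subgame-perfect profile.

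Together these three ingredients deliver $q_i \leq p_{i,k} \leq q_i + o(1)$, which yields $\lim_{k \to \infty} p_{i,k} = q_i$. Of the three, only the mimicking lower bound requires real work; the middle equality and the upper-limit identity are essentially bookkeeping once the SPE-allocation and Karlin--Lei facts are in hand.
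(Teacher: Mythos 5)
There is a fundamental mismatch here: what you have written is not a proof of the statement in question. Theorem~\ref{thm:karlin} is a structural result about the \emph{immediate}-selection game of \citet{karlin2015competitive} --- it asserts that the SPE of that game is given by threshold strategies, that the thresholds $T_i$ exist, are unique and independent of $k$, satisfy $T_{i-1}\geq T_i$, and that $q_i = T_i/n$. Your proposal establishes none of these facts. Instead, it is a squeeze argument for the convergence $\lim_{k\to\infty}p_{i,k}=q_i$, i.e., for Theorem~\ref{thm:sec lex non immidate goes to immediate}, and as such it closely mirrors the paper's own proof of that other theorem (mimicking gives $p_{j,k}\geq q_j$; the identity $\sum_j q_j=\tau_k$ from \citet{matsui2016lower} together with $\tau_k\to 1$ gives the matching upper bound via $p_{i,k}=1-\sum_{j\neq i}p_{j,k}$). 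Worse, your argument explicitly invokes Theorem~\ref{thm:karlin} ("By Theorem~\ref{thm:karlin}, $q_j=T_j/n$\dots", and again when you use the monotonicity of the $T_j$ to justify the mimicking bound), so read as a proof of Theorem~\ref{thm:karlin} it is circular.

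A genuine proof of the stated theorem would have to analyze the immediate-selection game directly: show by backward induction over the set of active agents that each agent's best response is a stopping rule depending only on $t$ and her rank among the active agents; show that her continuation value is nonincreasing in $t$ while the value $t/n$ of grabbing the current best-so-far is increasing, which forces a unique crossing point $T_i$; derive the recursion determining $T_i$ and check that it does not depend on $k$ and that $T_{i-1}\geq T_i$; and finally verify $q_i=T_i/n$ from the indifference condition at the threshold. None of this appears in your write-up. (Note that the paper itself does not prove Theorem~\ref{thm:karlin} either --- it is imported verbatim from \citet{karlin2015competitive} --- but that does not make an argument for a different theorem an acceptable substitute.)
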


We are now ready to present the proof of Theorem~\ref{thm:sec lex non immidate goes to immediate}.
\begin{proof}
\citet{matsui2016lower} showed an interesting connection between a $k$-agent game with immediate selections, and a scenario where a single decision maker is allowed to select $k$ (out of $n$) awards, and wishes to maximize the probability of getting the maximal award. Denote this last probability by $\tau_k$. Specifically, they show that:
\begin{equation}
\sum_{i=1}^k q_i = \tau_k. \label{eq:qk}
\end{equation} 
It is also known that  (see, e.g., \cite{Gilbert66Recognizing} and  \cite{EzraFN18})
\begin{equation}
\lim_{k\rightarrow\infty}\tau_k=1. \label{eq:qk2}
\end{equation}

By combining Equations \eqref{eq:qk} and \eqref{eq:qk2}, it follows that:
\begin{equation}\label{eq:pi1}
\lim_{k\rightarrow\infty}{\sum_{i\leq k}{q_i}}=1.
\end{equation}

A straightforward corollary of Theorem \ref{thm:karlin} is that every agent $i \in [k]$ has a strategy $S_i$ that guarantees her a winning probability of at least $q_i$ in under non-immediate selections, independent of the strategies played by other agents. To see this, observe that the $T_i$-threshold strategy gives this guarantee, by the monotonicity of $T_i$. Formally, for every number of agents $k$ and every $i \in [k]$, 
\begin{equation}\label{eq:upper}
p_{i,k} \geq  q_i.
\end{equation}

Combining all of the above we get that:
\begin{eqnarray}
\lim_{k\rightarrow \infty} p_{i,k} & {=}  & \lim_{k\rightarrow \infty} (1- \sum_{j\in [k]\setminus\{i\}} p_{j,k}) \nonumber \\
& \stackrel{\eqref{eq:upper}}{\leq} & \lim_{k\rightarrow \infty}(1- \sum_{j\in [k]\setminus\{i\}} q_j )\nonumber \\
&=&\lim_{k\rightarrow \infty}(1+q_i- \sum_{j\in [k]} q_j  )\nonumber \\
&\stackrel{\eqref{eq:pi1}}{=}& q_i\label{eq:pkpi}.
\end{eqnarray}
The assertion of the theorem follows by Equations  \eqref{eq:upper} and \eqref{eq:pkpi}. 	
\end{proof}

%

	\section{Discussion and Future Directions}
	\label{sec:future}

In this work we study secretary settings with competing decision makers. 
While in previous secretary settings, including ones where competition among multiple agents is considered, decisions must be made immediately, we introduce a model where the time of selection is part of the agent's strategy, and thus the competition is endogenous. In particular, decisions need not be immediate, and agents may select previous awards as long as they are available.
We believe that this setting captures many real-world settings, where agents compete over ``awards" that may remain available until taken by a competitor.  

This work suggests open problems and directions for future research.
For the ranked tie-breaking rule, we fully characterize the equilibria of a 3-agent game, and derive the corresponding utilities of the agents. Extending this characterization to any number of agents is an interesting open problem. 

Below we list some future directions that we find particularly natural. 

\begin{itemize}
\item Study competition in additional problems related to optimal stopping theory, such as prophet and pandora box settings.
\item Study competition in secretary settings under additional tie-breaking rules, such as random tie breaking with non-uniform distribution, and tie-breaking rules that allow to split awards among agents.
\item Study competition in secretary settings under additional feasibility constraints. For example, scenarios where agents can choose up to $k$ awards, or other matroid constraints. 
\item Extend the current study to additional objective functions. 
\end{itemize}


	\bibliographystyle{ACM-Reference-Format}
	\bibliography{secretaries}
	\appendix	
	\section{Proof of Lemma~\ref{lem:atbt}}
	We prove the lemma by induction on $t$ and $\ell_t$ ($t=n,\ldots,1$, and $\ell_t=1,\ldots,k$). 
	For the base of the induction (either $t=n$ or $\ell_t=1$), observe that:
\begin{itemize}
	\item 	$A_t^1 =B_t^1 = (1,0)$ for all $t$. Indeed, if at some point, agent $i$ is the only active agent, and the highest award so far is still available, then agent $i$ selects the maximal award. Thus, the winning probability is $1$.
	
	\item $A_n^{\ell_t} \geq (\frac{1}{\ell_t},\frac{1}{\ell_t})$ for all $\ell_t$.
	Indeed, at time $n$ according to strategy $\sigma$, agent $i$ selects the maximal award, and therefore wins with probability at least $\frac{1}{\ell_t}$. If he looses, then he selects the maximal award and consequently wins the second place with probability at least $\frac{\ell_t-1}{\ell_t}\cdot \frac{1}{\ell_t-1}= \frac{1}{\ell_t}$.
	
	\item $B_n^{\ell_t} \geq (\frac{1}{\ell_t},0)$ for all $\ell_t$.
	Indeed, at time $n$ according to strategy $\sigma$, agent $i$ selects the maximal award, and therefore wins with probability at least $\frac{1}{\ell_t}$.
	
	\item	$C_n^{\ell_t} \geq (0,\frac{1}{\ell_t})$ for all $\ell_t$. Indeed, if the highest award (among all the $n$ awards) has been already allocated, then the agent selects the maximal available award, which is the second highest, and if he gets it, then he wins second place. The probability of winning second place is at least $\frac{1}{\ell_t}$.
	
	\item $C_t^1=(\frac{n-t}{n},\frac{t}{n})$ for every $t$.
	According to $\sigma$, agent $i$ waits until the maximal award so far is available, and if such an award does not arrive, then he selects the maximal award available at time $n$. If the highest award appears between time $t+1$ and $n$, then agent $i$ wins first place; else, he wins second place.
	
	\item	$D_n^{\ell_t} \geq (0,0)$ for all $\ell_t$. This holds trivially.
	\item $D_t^1=\left(\frac{n-t}{n},\frac{t(n-t)}{n(n-1)}\right)$ for every $t$.
	According to $\sigma$, agent $i$ waits until the maximal award so far is available, and if such an award does not arrive, then he selects the maximal award available at time $n$. If the highest award appears between time $t+1$ and $n$, then the agent wins first place (which happens with probability $\frac{n-t}{n}$). Else, if the second maximal award appears at this time (which happens with probability $\frac{t(n-t)}{n(n-1)}$), he wins second place.

	\end{itemize}
\ronedit{Notice that the case of $\ell_t=1$ has different bounds than for general $\ell_t$.}
We now show the step of the induction. Let $t<n$, and $\ell_t>1$. The following hold:
\begin{enumerate}
\item For the case where $\ell_t=2$ and $\frac{t}{n} \geq \frac{1}{\ell_t}$, according to $\sigma$, agent $i$ competes over the maximal award so far. Assuming that $\ell' \in \{1,2\}$ agents (including $i$) compete over this award, we get that:
\begin{eqnarray*}
	A_t^{\ell_t} &\geq&  \min_{1\leq \ell'\leq \ell_t }\left\{\frac{1}{\ell'} \cdot \left(\frac{t}{n},\frac{n-t}{n}\right) + \frac{\ell'-1}{\ell'}C_t^{\ell_t-1} \right\}\\
	&\geq&  \min_{1\leq \ell'\leq \ell_t }\left\{\frac{1}{\ell'} \cdot \left(\frac{t}{n},\frac{n-t}{n}\right) + \frac{\ell'-1}{\ell'}\left(\frac{n-t}{n(\ell_t-1)},\frac{t}{n}\right)\right\}\\  & =  &
	\frac{1}{2} \cdot \left(\frac{t}{n},\frac{n-t}{n}\right) + \frac{1}{2}\left(\frac{n-t}{n},\frac{t}{n}\right)\\  & = &  \left(\frac{1}{2},\frac{1}{2}\right),
\end{eqnarray*} where the first inequality holds since the probability of receiving the award under competition is $\frac{1}{\ell_t}$, and the probability of this award to be the maximal is $\frac{1}{\ell_t}$, and to be the second maximal is $\frac{n-t}{n}$. The second inequality is by the induction hypothesis. In this case, if $t=\frac{n}{2} $ and $\ell_t=2$, then both values of $\ell'$ give same lower bound.

Similarly,
\begin{eqnarray*}
	B_t^{\ell_t} &\geq&  \min_{1\leq \ell'\leq \ell_t }\left\{\frac{1}{\ell'} \cdot \left(\frac{t}{n},\frac{n-t}{n}\right) + \frac{\ell'-1}{\ell'}D_t^{\ell_t-1} \right\}\\
	&\geq&  \min_{1\leq \ell'\leq \ell_t }\left\{\frac{1}{\ell'} \cdot \left(\frac{t}{n},\frac{n-t}{n}\right) + \frac{\ell'-1}{\ell'}\left(\frac{n-t}{n(\ell_t-1)},\frac{t(n-t)}{n(n-1)}\right)\right\}\\  & =  &
	\frac{1}{2} \cdot \left(\frac{t}{n},\frac{n-t}{n}\right) + \frac{1}{2}\left(\frac{n-t}{n},\frac{t(n-t)}{n(n-1)}\right)\\  & = &  \left(\frac{1}{2},\frac{(n+t-1)(n-t)}{2n(n-1)}\right),
\end{eqnarray*} where the first inequality holds since the probability of receiving the award under competition is $\frac{1}{\ell_t}$, and the probability of this award to be the maximal is $\frac{1}{\ell_t}$, and to be the second maximal is $\frac{n-t}{n}$. The second inequality is by the induction hypothesis. In this case if $t=\frac{n}{2}$ and $\ell_t=2$, the unique minimum is reached at $\ell'=2$.

\item  \ronedit{For the case where $\ell_t=2$ and $\frac{t}{n} < \frac{1}{\ell_t}$.  According to $\sigma$, agent $i$ does not compete over any of the available awards. Thus, either no other agent selects an award and the winning probability of agent $i$ is at least $A_{t+1}^{\ell_t}$, or the number of active agents decreases by one. Thus,
	
	\begin{eqnarray*}
		A_t^{\ell_t} &\geq&  \min\left(C_t^{\ell_t-1},A_{t+1}^{\ell_t} \right)\\
		&\geq& \min\left(	\left(\frac{n-t}{n\ronedit{(\ell_t-1)}},\frac{t}{n\ronedit{(\ell_t-1)}}\right),\left( \frac{1}{\ell_t},\frac{1}{\ell_t}\right) \right)\\  
		& = &  \left(\frac{1}{\ell_t},\frac{1}{\ell_t}\right),
	\end{eqnarray*} 
	 where the first inequality holds since the first term lower bounds the case where there exists another agent that selects the maximal available award, and the second term lower bounds the case that no other agent selects the maximal available award. The second inequality is by the induction hypothesis.
	
	Similarly,
	\begin{eqnarray*}
		B_t^{\ell_t} &\geq&  \min\left(D_t^{\ell_t-1},\frac{t-1}{t+1}B_{t+1}^{\ell_t} + \frac{2}{t+1}A_{t+1}^{\ell_t}\right)\\
		&\geq& \min\left(	\left(\frac{n-t}{n(\ell_t-1)},\frac{t(n-t)}{n(n-1)(\ell_t-1)}\right),\frac{t-1}{t+1}\left(\frac{1}{\ell_t},\frac{(n-t-1)(n+t)}{n(n-1)\ell_t}\right)+\frac{2}{t+1}\left( \frac{1}{\ell_t},\frac{1}{\ell_t}\right) \right)\\  
		& = &  \left(\frac{1}{\ell_t},\frac{(n-t)(n+t-1)}{n(n-1)\ell_t}\right),
	\end{eqnarray*} where the first inequality holds since the first term lower bounds the case where there exists another agent that selects the maximal available award, and the second term lower bounds the case that no other agent selects the maximal available award, and thus at time $t+1$, both the maximal and second maximal awards will be available with probability  $\frac{2}{t+1}$. The second inequality is by the induction hypothesis.}

\item For  the case where $\ell_t>2$  and  $\frac{t}{n} > \frac{1}{\ell_t}$, according to $\sigma$, agent $i$ competes over the maximal award so far. Assuming that $\ell' $ agents (including $i$) compete over this award, we get that:
\begin{eqnarray*}
	A_t^{\ell_t} &\geq&  \min_{1\leq \ell'\leq \ell_t }\left\{\frac{1}{\ell'} \cdot \left(\frac{t}{n},\frac{t(n-t)}{n(n-1)}\right) + \frac{\ell'-1}{\ell'}C_t^{\ell_t-1} \right\}\\
	&\geq&  \min_{1\leq \ell'\leq \ell_t }\left\{\frac{1}{\ell'} \cdot \left(\frac{t}{n},\frac{t(n-t)}{n(n-1)}\right) + \frac{\ell'-1}{\ell'}\left(\frac{n-t}{n(\ell_t-1)},\frac{n^2+t^2-tn-n}{n(n-1)(\ell_t-1)}\right)\right\}\\  & =  &
	\frac{1}{\ell_t} \cdot \left(\frac{t}{n},\frac{t(n-t)}{n(n-1)}\right) + \frac{\ell_t-1}{\ell_t}\left(\frac{n-t}{n(\ell_t-1)},\frac{n^2+t^2-tn-n}{n(n-1)(\ell_t-1)}\right)\\  & = &  \left(\frac{1}{\ell_t},\frac{1}{\ell_t}\right),
\end{eqnarray*} where the first inequality holds since the probability of receiving the award under competition is $\frac{1}{\ell_t}$, and the probability of this award to be the maximal is $\frac{1}{\ell_t}$, and to be the second maximal is $\frac{t(n-t)}{n(n-1)}$. The second inequality is by the induction hypothesis.

Similarly,
\begin{eqnarray*}
	B_t^{\ell_t} &\geq&  \min_{1\leq \ell'\leq \ell_t }\left\{\frac{1}{\ell'} \cdot \left(\frac{t}{n},\frac{t(n-t)}{n(n-1)}\right) + \frac{\ell'-1}{\ell'}D_t^{\ell_t-1} \right\}\\
	&\geq&  \min_{1\leq \ell'\leq \ell_t }\left\{\frac{1}{\ell'} \cdot \left(\frac{t}{n},\frac{t(n-t)}{n(n-1)}\right) + \frac{\ell'-1}{\ell'}\left(\frac{n-t}{n(\ell_t-1)},\frac{n-t}{n(\ell_t-1)}\right)\right\}\\  & =  &
	\frac{1}{\ell_t} \cdot \left(\frac{t}{n},\frac{t(n-t)}{n(n-1)}\right) + \frac{\ell_t-1}{\ell_t}\left(\frac{n-t}{n(\ell_t-1)},\frac{n-t}{n(\ell_t-1)}\right)\\  & = &\left(\frac{1}{\ell_t},\frac{(n-t)(n+t-1)}{n(n-1)\ell_t}\right),
\end{eqnarray*} where the first inequality holds since the probability of receiving the award under competition is $\frac{1}{\ell_t}$, and the probability of this award to be the maximal is $\frac{1}{\ell_t}$, and to be the second maximal is $\frac{t(n-t)}{n(n-1)}$. The second inequality is by the induction hypothesis.

\item For the case where \ronedit{$\frac{t}{n} \leq \frac{1}{\ell_t}$ and $\ell_t>2$,}\ron{\uline{$\frac{t}{n} < \frac{1}{\ell_t}$ and $\ell_t\geq 2$, or $\frac{t}{n}=\frac{1}{\ell_t}$ and $\ell_t>2$}} according to $\sigma$, agent $i$ does not compete over any of the available awards. Thus, either no other agent selects an award and the winning probability of agent $i$ is at least $A_{t+1}^{\ell_t}$, or the number of active agents decreases by one. Thus,

\begin{eqnarray*}
	A_t^{\ell_t} &\geq&  \min\left(C_t^{\ell_t-1},A_{t+1}^{\ell_t} \right)\\
&\geq& \min\left(	\left(\frac{n-t}{n\ronedit{(\ell_t-1)}},\frac{n^2+t^2-tn-n}{n(n-1)\ronedit{(\ell_t-1)}}\right),\left( \frac{1}{\ell_t},\frac{1}{\ell_t}\right) \right)\\  
 & = &  \left(\frac{1}{\ell_t},\frac{1}{\ell_t}\right),
\end{eqnarray*} 
where the first inequality holds since the first term lower bounds the case where there exists another agent that selects the maximal available award, and the second term lower bounds the case that no other agent selects the maximal available award. The second inequality is by the induction hypothesis.

Similarly,
\begin{eqnarray*}
	B_t^{\ell_t} &\geq&  \min\left(D_t^{\ell_t-1},\frac{t-1}{t+1}B_{t+1}^{\ell_t} + \frac{2}{t+1}A_{t+1}^{\ell_t}\right)\\
	&\geq& \min\left(	\left(\frac{n-t}{n(\ell_t-1)},\frac{n-t}{n(\ell_t-1)}\right),\frac{t-1}{t+1}\left(\frac{1}{\ell_t},\frac{(n-t-1)(n+t)}{n(n-1)\ell_t}\right)+\frac{2}{t+1}\left( \frac{1}{\ell_t},\frac{1}{\ell_t}\right) \right)\\  
	& = &  \left(\frac{1}{\ell_t},\frac{(n-t)(n+t-1)}{n(n-1)\ell_t}\right),
\end{eqnarray*} where the first inequality holds since the first term lower bounds the case where there exists another agent that selects the maximal available award, and the second term lower bounds the case that no other agent selects the maximal available award, and thus at time $t+1$, both the maximal and second maximal awards will be available with probability  $\frac{2}{t+1}$. The second inequality is by the induction hypothesis.

\item If the maximal award so far is not available, then according to $\sigma$, agent $i$ does not select an award. Thus,
\begin{eqnarray*}
	C_t^{\ell_t}  
	&\geq&  \frac{1}{t+1}B_{t+1}^{\ell_t}+\frac{t}{t+1}C_{t+1}^{\ell_t}  
	\\ 
	& \geq & \frac{1}{t+1} \left(\frac{1}{\ell_t},\frac{(n-t-1)(n+t)}{n(n-1)\ell_t} \right) +\frac{t}{t+1} \left(\frac{n-t-1}{n\ell_t},\frac{n^2+(t+1)^2-(t+1)n-n}{n(n-1)\ell_t} \right)
	 \\ & = &  \left(\frac{n-t}{n\ell_t},\frac{n^2+t^2-tn-n}{n(n-1)\ell_t}\right),
\end{eqnarray*} where the first inequality holds since the award at time $t+1$ is the maximal so far with probability $\frac{1}{t+1}$. The second inequality is by the induction hypothesis.
It also holds that:
\begin{eqnarray*}
	D_t^{\ell_t} 
	&\geq&  \frac{1}{t+1}B_{t+1}^{\ell_t}+\frac{1}{t+1}C_{t+1}^{\ell_t}+\frac{t-1}{t+1}D_{t+1}^{\ell_t}  
	\\ 
	& \geq & \frac{1}{t+1} \left(\frac{1}{\ell_t},\frac{(n-t-1)(n+t)}{n(n-1)\ell_t} \right) +\frac{1}{t+1} \left(\frac{n-t-1}{n\ell_t},\frac{n^2+(t+1)^2-(t+1)n-n}{n(n-1)\ell_t} \right) \\
	& + &\frac{t-1}{t+1} \left(\frac{n-t-1}{n\ell_t},\frac{n-t-1}{n\ell_t} \right)
	\\ & = &  \left(\frac{n-t}{n\ell_t},\frac{n-t}{n\ell_t}\right),
\end{eqnarray*} 
where the first inequality holds since the award at time $t+1$ is the maximal or second maximal so far with probability $\frac{1}{t+1}$. The second inequality holds by the induction hypothesis.
\end{enumerate}
\qed

\end{document}